\tikzstyle{proglabel}=[shape=circle,draw,inner sep=0pt,minimum size=5mm]
\tikzstyle{tran}=[draw,->,>=stealth, rounded corners]
\lstdefinelanguage{affprob}
{
morekeywords={prob, if, then, else, fi, while, do, od, true, false, and, or, skip},
sensitive = false
}
\DeclareMathAlphabet{\mathpzc}{OT1}{pzc}{m}{it}
\newcommand{\Nats}{\mathbb{N}}
\newcommand{\expv}{\mathbb{E}}
\newcommand{\Eval}{\mathsf{ET}}
\newcommand{\supp}{\mathrm{Supp}}
\newcommand{\wh}{\widehat}
\newcommand{\pvars}{X}
\newcommand{\rvars}{R}
\newcommand{\locs}{\mathit{L}}
\newcommand{\loc}{\ell}
\newcommand{\initval}{\vec{x}_0}
\newcommand{\Rset}{\mathbb{R}}
\newcommand{\Nset}{\mathbb{N}}
\newcommand{\Zset}{\mathbb{Z}}
\newcommand{\lin}{\mathit{in}}
\newcommand{\lout}{\mathit{out}}
\newcommand{\transitions}{\mapsto}
\newcommand{\assgn}[2]{[#1/#2]}
\newcommand{\id}{\mbox{\sl id}}
\newcommand{\probm}{\mathbb{P}}
\newcommand{\inv}{I}
\newcommand{\PRSMSynth}{{\sc PRSMSynth}}
\newcommand{\UB}{\mathsf{UB}}
\newcommand{\POLYS}{{\mathfrak{R}}{\left[x_1,\dots, x_{|X|}\right]}}
\newcommand{\POLYNS}[1]{{\mathfrak{R}}{\left[#1\right]}}
\newcommand{\DNF}[1]{\mathrm{DNF}(#1)}
\newcommand{\TRUE}{\mbox{\sl true}}
\newcommand{\FALSE}{\mbox{\sl false}}
\newcommand{\TERM}[1]{\mathsf{PP}(#1)}
\newcommand{\SAT}[1]{\left\llbracket{#1}\right\rrbracket}
\newcommand{\PO}[1]{\mbox{\sl PO}(#1)}
\newcommand{\support}[1]{\mathrm{supp}_{#1}}
\begin{document}

\title{Termination Analysis of Probabilistic Programs \\through Positivstellensatz's\thanks{A conference version of the paper appears in~\cite{CFK16}}}
\author{Krishnendu Chatterjee\inst{1} \and Hongfei Fu\inst{1,2} 
\and Amir Kafshdar Goharshady\inst{1}}
\institute{IST Austria \and 
State Key Laboratory of Computer
Science, Institute of Software, Chinese
Academy of Sciences, P.R. China
}

\maketitle

\begin{abstract}
We consider nondeterministic probabilistic programs with the most basic 
liveness property of termination.
We present efficient methods for termination analysis of nondeterministic 
probabilistic programs with polynomial guards and assignments. 
Our approach is through synthesis of polynomial ranking supermartingales,
that on one hand significantly generalizes linear ranking supermartingales 
and on the other hand is a counterpart of polynomial ranking-functions 
for proving termination of nonprobabilistic programs. 
The approach synthesizes  polynomial ranking-supermartingales 
through Positivstellensatz's,  
yielding an efficient method which is not only sound, but also
semi-complete over a large subclass of programs. 
We show experimental results to demonstrate that our approach can handle 
several classical programs with complex polynomial guards and assignments, 
and can synthesize efficient quadratic ranking-supermartingales when a linear 
one does not exist even for simple affine programs. 
\end{abstract}

\section{Introduction}\label{sec:introduction}

\noindent{\em Probabilistic Programs.}
Classic imperative programs extended with \emph{random-value generators} 
gives rise to probabilistic programs.
Probabilistic programs provide the appropriate framework to model
applications ranging from  randomized algorithms~\cite{RandBook,RandBook2}, 
to stochastic network protocols~\cite{BaierBook,prism}, 
to robot planning~\cite{KGFP09,kaelbling1998planning}, etc. 
Nondeterminism plays a crucial role in modeling, such as, to model
behaviors over which there is no control, or for abstraction. 
Thus nondeterministic probabilistic programs are crucial in a huge
range of problems, and hence their formal analysis has been studied 
across disciplines, such as probability theory and
statistics~\cite{Durrett,Howard,Kemeny,Rabin63,PazBook},
formal methods~\cite{BaierBook,prism},
artificial intelligence~\cite{LearningSurvey,kaelbling1998planning},
and programming languages~\cite{SriramCAV,HolgerPOPL,SumitPLDI,EGK12}.

\smallskip\noindent{\em Basic Termination Questions.}
Besides safety properties, the most basic property for analysis of 
programs is the liveness property. 
The most basic and widely used notion of liveness for programs is 
\emph{termination}.
In absence of probability (i.e., for nonprobabilistic programs), 
the synthesis of \emph{ranking functions} and proof of termination 
are equivalent~\cite{rwfloyd1967programs}, and numerous approaches
exist for synthesis of ranking functions for nonprobabilistic 
programs~\cite{DBLP:conf/cav/BradleyMS05,DBLP:conf/tacas/ColonS01,DBLP:conf/vmcai/PodelskiR04,DBLP:conf/pods/SohnG91}.
The most basic extension of the termination question for probabilistic 
programs is the \emph{almost-sure termination} question
which asks whether a program terminates with probability~1.
Another fundamental question is about \emph{finite termination} (aka positive almost-sure
termination~\cite{HolgerPOPL,BG05}) which asks whether the expected
termination time is finite.
The next interesting question is the \emph{concentration} bound computation 
problem that asks to compute a bound $M$ such that the probability that 
the termination time is below $M$ is concentrated,
or in other words, the probability that the termination time exceeds the
bound $M$ decreases exponentially.

\smallskip\noindent{\em Previous Results.}
We discuss the relevant previous results for termination analysis of probabilistic 
programs.

\begin{compactitem}
\item \emph{Probabilistic Programs.}
First, quantitative invariants was introduced to establish termination of
discrete probabilistic programs with demonic nondeterminism~\cite{MM04,MM05}, 
This was extended in~\cite{SriramCAV} to \emph{ranking  supermartingales} 
resulting in a sound (but not complete) approach to prove almost-sure 
termination of probabilistic programs without nondeterminism but with 
integer- and real-valued random variables from distributions like uniform, 
Gaussian, and Poison, etc.
For probabilistic programs with countable state-space and without
nondeterminism, the Lyapunov ranking functions provide a sound and complete
method for proving finite termination~\cite{BG05,Foster53}.
Another sound method is to explore bounded-termination with exponential 
decrease of probabilities~\cite{DBLP:conf/sas/Monniaux01} 
through abstract interpretation~\cite{DBLP:conf/popl/CousotC77}. 
For probabilistic programs with nondeterminism, a sound and complete 
characterization for finite termination through ranking-supermartingale 
is obtained in~\cite{HolgerPOPL}.
Ranking supermartingales thus provide a very powerful approach for
termination analysis of probabilistic programs. 

\item \emph{Ranking Functions/supermartingales Synthesis.}
Synthesis of linear ranking-functions/ranking-supermartingales has been studied 
extensively in~\cite{DBLP:conf/vmcai/PodelskiR04,DBLP:conf/tacas/ColonS01,SriramCAV,DBLP:conf/popl/ChatterjeeFNH16}.
In context of probabilistic programs, the algorithmic study of synthesis of 
linear ranking supermartingales for probabilistic programs (cf.~\cite{SriramCAV}) 
and probabilistic programs with nondeterminism (cf. our previous result~\cite{DBLP:conf/popl/ChatterjeeFNH16}) has been studied.
The major technique adopted in these results is Farkas' Lemma~\cite{FarkasLemma} 
which serves as a complete reasoning method for linear inequalities. 
Beyond linear ranking functions, polynomial ranking functions have also been considered.
Heuristic synthesis method of polynomial ranking-functions is studied 
in~\cite{DBLP:journals/fac/BabicCHR13,DBLP:conf/vmcai/BradleyMS05}: 
Cook~\emph{et al.}~\cite{DBLP:journals/fac/BabicCHR13} checked termination of 
deterministic polynomial programs by detecting divergence on program variables 
and Bradley~\emph{et al.}~\cite{DBLP:conf/vmcai/BradleyMS05} extended to nondeterministic 
programs through an analysis on finite differences over transitions. 
More general methods for deterministic polynomial programs are given 
by~\cite{DBLP:conf/vmcai/Cousot05,DBLP:journals/jossac/ShenWYZ13} 
where Cousot~\cite{DBLP:conf/vmcai/Cousot05} uses Lagrangian Relaxation, 
and Shen~\emph{et al.}~\cite{DBLP:journals/jossac/ShenWYZ13} use 
Putinar's Positivstellensatz~\cite{PutinarPositivstellensatz}.  
Complete methods of synthesizing polynomial ranking-functions for nondeterministic programs 
are studied by Yang~\emph{et al.}~\cite{DBLP:journals/fcsc/YangZZX10}, where 
a complete method through root classification/real root isolation of semi-algbebraic 
systems and quantifier elimination is proposed.  

\end{compactitem}
To summarize, while many different approaches has been studied, the algorithmic 
study of synthesis of ranking supermartingales for probabilistic programs has only
been limited to linear ranking supermartingales. 
For example,~\cite{SriramCAV} presents a method of synthesis of linear ranking
supermartingales for probabilistic programs without nondeterminism, and identifies
synthesis of more general nonlinear supermartingales, or extension to probabilistic
programs with nondeterminism as important challenges. While the approach of~\cite{SriramCAV}
has been extended to probabilistic programs with nondeterminism in our previous result~\cite{DBLP:conf/popl/ChatterjeeFNH16}, 
it is restricted to linear ranking supermartingales.
Hence there is no algorithmic approach to handle nonlinear ranking supermartingales
even for probabilistic programs without nondeterminism.

\smallskip\noindent{\em Our Contributions.} Our contributions are as follows:
\begin{compactenum}

\item \emph{Polynomial Ranking Supermartingales.} 
First, we extend the notion of linear ranking supermartingales (LRSM)
to polynomial ranking supermartingales (pRSM).
We show (by a straightforward extension of LRSM) that pRSM implies both 
almost-sure as well as finite termination.

\item \emph{Positivstellensatz's.}
Second, we conduct a detailed investigation on the application of 
Positivstellensatz's (German for ``positive-locus-theorem'' which 
is related to polynomials over semialgebraic sets)
(cf. Sect.~\ref{sect:positivstellensatz}) to synthesis of 
pRSMs over nondeterministic probabilistic programs. 
To the best of our knowledge, this is the first result which demonstrates the 
synthesis of a polynomial subclass of ranking supermartingales through Positivstellensatz's.

\item \emph{New Approach for Non-probabilistic Programs.}
Our results also extend existing results 
for nonprobabilistic programs. 
We present the first result that uses Schm\"{u}dgen's Positivstellensatz~\cite{SchmudgenPositivstellensatz} 
and Handelman's Theorem~\cite{HandelmanTheorem} to synthesize polynomial 
ranking-functions for nonprobabilistic programs.

\item \emph{Efficient Approach.} 
The previous complete method~\cite{DBLP:journals/fcsc/YangZZX10} 
suffers from high computational complexity due to the use of quantifier elimination.
In contrast, our approach (sound but not complete) is efficient since the synthesis can be 
accomplished through linear or semi-definite programming, 
which can mostly be solved in polynomial time in the problem size~\cite{SemidefiniteProgramming}. 
In particular, our approach does not require quantifier elimination, and works 
for nondeterministic probabilistic programs.

\item \emph{Experimental Results.}
We demonstrate the effectiveness of our approach on several classical examples.
We show that on classical examples, such as Gambler's Ruin, and Random Walk, 
our approach can synthesize a pRSM efficiently. 
For these examples, LRSMs do not exist, and many of them cannot be analysed efficiently by previous approaches. 

\end{compactenum}

\smallskip\noindent{\em Technical Contributions and Novelty.}
The main technical contributions and novelty are: 
\begin{compactenum}

\item While Farkas' Lemma and Motzkin's Transposition Theorem
are standard techniques to linear ranking functions or linear
ranking supermartingales, they are not sufficient for pRSMs. 
Instead, our technical contributions is to use various Positivstellensatz's
to synthesize pRSMs.

\item Even for nonprobabilistic programs, only a limited number of 
Positivstellensatz's have been used, e.g.,~\cite{DBLP:journals/jossac/ShenWYZ13};
some of the Positivstellensatz's we use 
(such as Schm\"{u}dgen's Positivstellensatz~\cite{SchmudgenPositivstellensatz} 
and Handelman's Theorem~\cite{HandelmanTheorem}) have not even been used in 
the context of nonprobabilistic programs. 

\end{compactenum}
In summary, we study the use of Positivstellensatz's for the first time for probabilistic 
programs, and for some of them, even for the first time for nonprobabilistic 
programs, and show that how they can be used for efficient algorithms for program analysis. 

\smallskip\noindent{\em Organization of the Paper.} 
In Sect.~\ref{sect:probprog}, we present the syntax and semantics of probabilistic programs. 
In Sect.~\ref{sect:terminationquestion}, we define the problems to be studied.
In Sect.~\ref{sect:prsm}, we develop the notion of polynomial ranking supermartingale. 
Then in Sect.~\ref{sect:synthesisalgorithm}, we present Positivstellensatz and 
our algorithm to synthesize polynomial ranking supermartingales.
In Sect.~\ref{sect:experimental}, we present our experimental results. 
Finally, Sect.~\ref{sect:conclusion} concludes the paper.

\section{Probabilistic Programs}\label{sect:probprog}

\subsection{Basic Notations and Concepts}

For a set $A$, we denote by $|A|$ the cardinality of $A$. 
We denote by $\Nset$, $\Nset_0$, $\Zset$, and $\Rset$ the sets of all positive integers, non-negative integers, integers, and real numbers, respectively. 
We use boldface notation for vectors, e.g. $\vec{x}$, $\vec{y}$, etc, and we denote an $i$-th component of a vector $\vec{x}$ by $\vec{x}[i]$. 

\smallskip\noindent{\em Polynomial Predicates.} 
Let $X$ be a finite set of variables endowed with a fixed linear order under which we have $X=\{x_1,\dots,x_{|X|}\}$. 
We denote the set of real-coefficient polynomials by $\POLYS$ or $\POLYNS{X}$. 
A \emph{polynomial constraint} over $X$ is a logical formula of the form ${g_1}{\Join}{g_2}$, where $g_1,g_2$ are polynomials over $X$ and $\Join\in\{<,\le,>,\ge\}$. 
A \emph{propositional polynomial predicate} over $X$ is a propositional formula whose all atomic propositional literals are either $\TRUE,\FALSE$ or polynomial constraints over $X$.
The validity of the satisfaction assertion $\vec{x}\models\phi$ between a vector $\vec{x}\in\Rset^{|\pvars|}$ (interpreted in the way that the value for $x_j$ $(1\le j\le |X|)$ is $\vec{x}[j]$)
and a propositional polynomial predicate $\phi$ is defined in the standard way w.r.t polynomial evaluation and normal semantics for logical connectives (cf.  Appendix~\ref{sect:predicate}). 
The satisfaction set of a propositional polynomial predicate $\phi$ is defined as $\SAT{\phi}:=\{\vec{x}\in\Rset^{|\pvars|}\mid \vec{x}\models\phi\}$. 
For more on polynomials (e.g., polynomial evaluation and arithmetic over polynomials), we refer to the textbook~\cite[Chapter 3]{AlgebraHungerford}.

\smallskip\noindent{\em Probability Space.} A \emph{probability space} is a triple $(\Omega,\mathcal{F},\probm)$, where $\Omega$ is a non-empty set (so-called \emph{sample space}), $\mathcal{F}$ is a \emph{$\sigma$-algebra} over $\Omega$ (i.e., a collection of subsets of $\Omega$ that contains the empty set $\emptyset$ and is closed under complementation and countable union), and $\probm$ is a \emph{probability measure} on $\mathcal{F}$, i.e., a function $\probm\colon \mathcal{F}\rightarrow[0,1]$ such that (i) $\probm(\Omega)=1$ and 
(ii) for all set-sequences $A_1,A_2,\dots \in \mathcal{F}$ that are pairwise-disjoint
(i.e., $A_i \cap A_j = \emptyset$ whenever $i\ne j$) 
it holds that $\sum_{i=1}^{\infty}\probm(A_i)=\probm\left(\bigcup_{i=1}^{\infty} A_i\right)$~. 

\smallskip\noindent{\em Random Variables and Filtrations.} A \emph{random variable} $X$ in a probability space $(\Omega,\mathcal{F},\probm)$ is an $\mathcal{F}$-measurable function $X\colon \Omega \rightarrow \Rset \cup \{-\infty,+\infty\}$, i.e., 
a function satisfying the condition that for all $d\in \Rset \cup \{+\infty, -\infty\}$, the set $\{\omega\in \Omega\mid X(\omega)\leq d\}$ belongs to $\mathcal{F}$. 
The \emph{expected value} of a random variable $X$, denote by $\expv(X)$, is defined as the Lebesgue integral of $X$ with respect to $\probm$, i.e., 
$\expv(X):=\int X\,\mathrm{d}\probm$~;
the precise definition of Lebesgue integral is somewhat technical and is 
omitted  here (cf.~\cite[Chapter 5]{Billingsley:book} for a formal definition).
A \emph{filtration} of a probability space $(\Omega,\mathcal{F},\probm)$ is an infinite sequence $\{\mathcal{F}_n \}_{n\in\Nset_0}$ of $\sigma$-algebras over $\Omega$ such that $\mathcal{F}_n \subseteq \mathcal{F}_{n+1} \subseteq\mathcal{F}$ for all $n\in\Nset_0$. 

\subsection{Probabilistic Programs}

\noindent{\bf The Syntax.}
The class of probabilistic programs we consider encompasses basic programming mechanisms
such as assignment statement (indicated by `:='), while-loop, if-branch, basic probabilistic mechanisms such as probabilistic branch 
(indicated by `prob') and random sampling, 
and demonic nondeterminism indicated by `$\star$'.  
Variables (or identifiers) of a probabilistic program are of \emph{real} type, i.e., values of the variables are real numbers; moreover, 
variables are classified into \emph{program} and \emph{sampling} variables, where program variables receive their values through assignment statements and sampling variables do through random samplings. 
We consider that each sampling variable $r$ is {\em bounded}, i.e., associated 
with a one-dimensional cumulative distribution function $\Upsilon_r$ and 
a non-empty bounded interval $\support{r}$ such that any random variable $z$ 
which respects $\Upsilon_r$ satisfies that $z$ lies in the bounded interval 
with probability~1. 
Due to space limit, we put details (e.g., grammar) in Appendix~\ref{app:syntax}. 
An example probabilistic program is illustrated in Example~\ref{ex:runningexample}. 

\lstset{language=affprob}
\lstset{tabsize=3}
\newsavebox{\progrunningexample}
\begin{lrbox}{\progrunningexample}
\begin{lstlisting}[mathescape]

1: while $1\le x\wedge x\le 10$ do
2:   if $\star$ then 
3:      $x:=x+r$  
     else 
4:     if prob($0.51$) then 
5:        $x:=x-1$ 
       else 
6:        $x:=x+1$ 
       fi
     fi
   od
7:

\end{lstlisting}
\end{lrbox}

\begin{example}\label{ex:runningexample}
Consider the running example depicted in Fig.~\ref{fig:runningexample}, where $r$
is a sampling  variable with the two-point distribution $\{1\mapsto 0.5,-1\mapsto 0.5\}$ where the probability to take values $1$ and $-1$ are both $0.5$. 
The probabilistic program models a scenario of Gambler's Ruin where the gambler has initial money $x$ and repeats gambling until he wins more than $10$ or lose all his money. 
The result of a gamble is nondeterministic: either win $1$ with probability $0.5$ (nondeterministic branch); 
or lose with probability $0.51$ (the probabilistic branch).
The numbers $1-7$ on the left are the program counters for the program, where $1$ is the initial program counter and $7$ the terminal program counter. 
\end{example}

\begin{figure}
\begin{minipage}{0.5\textwidth}
\centering
\usebox{\progrunningexample}
\caption{The Running Example: Gambler's Ruin}
\label{fig:runningexample}
\end{minipage}
\begin{minipage}{0.5\textwidth}
\centering
\begin{tikzpicture}[x = 2cm]

\node[proglabel] (while) at (1.5,-1.5) {$1$};
\node[proglabel] (demon) at (3,-1.5) {$2$};
\node[proglabel] (probA) at (2.5,0)    {$3$};
\node[proglabel] (probB) at (2.5,-3.5)   {$4$};
\node[proglabel] (assC)  at (1.8,-3.5)   {$5$};
\node[proglabel] (assD)  at (2.5,-2)   {$6$};
\node[proglabel] (fin)   at (1.2,0)   {$7$};

\draw[tran] (while) to node[auto, font=\scriptsize] {$x<1\vee x>10$} (fin);
\draw[tran] (while) to node[auto, font=\scriptsize] {$x\geq 1\wedge x\le 10$} (demon);

\draw[tran] (probA) --  node[auto, swap, pos=0.7] {$x\mapsto x+r$} (probA-|while) -- (while);
\draw[tran] (probB) -- node[auto, font=\scriptsize] {$0.51$}  (assC);
\draw[tran] (probB) -- node[auto, font=\scriptsize] {$0.49$}  (assD);

\draw[tran] (demon) -- node[auto, font=\scriptsize] {$\star$} (demon|-probA) -- (probA);
\draw[tran] (demon) -- node[auto, font=\scriptsize] {$\star$} (demon|-probB) -- (probB);
\draw[tran] (assC) --  (assC-|while) -- node[auto, pos=0.2] {$x\mapsto x-1$} (while);
\draw[tran] (assD) -- node[auto] {$x\mapsto x+1$} (assD-|while.305) --  (while.-55);
\end{tikzpicture}
\caption{The CFG of the Running Example}
\label{fig:runningcfg}
\end{minipage}
\end{figure}


\smallskip\noindent{\bf The semantics.} We use control flow graphs to capture the semantics
of probabilistic programs, which we define below.

\begin{definition}[Control Flow Graph]
\label{def:CFG}
A \emph{control flow graph} (CFG) is a tuple $\mathcal{G}=(\locs,\bot,(\pvars,\rvars),\transitions)$ with the following components:
\begin{compactitem}
\item $\locs$ is a finite set of \emph{labels} partitioned into four pairwise-disjoint subsets $\locs_\mathrm{d}$, $\locs_\mathrm{p},\locs_\mathrm{c}$ and $\locs_\mathrm{a}$ 
of demonic, probabilistic, conditional-branching (branching for short) and assignment labels, resp.; 
and $\bot$ is a special label not in $L$ called the \emph{terminal label};
\item $\pvars$ and $\rvars$ are disjoint finite sets of real-valued \emph{program} and \emph{sampling variables}  respectively ;
\item $\transitions$ is a \emph{transition relation} in which every member (called \emph{transition}) is a tuple of the form 
$(\loc,\alpha,\loc')$ for which $\loc$ (resp. $\loc'$) is the \emph{source label} (resp. \emph{target label}) in $\locs$ 
and $\alpha$ is either a real number in $(0,1)$ if $\loc\in\locs_\mathrm{p}$, or $\star$ if $\loc\in\locs_\mathrm{d}$, 
or a propositional polynomial predicate if $\loc\in\locs_\mathrm{c}$, or an \emph{update function} $f\colon \Rset^{|\pvars|}\times\Rset^{|\rvars|}\rightarrow \Rset^{|\pvars|}$ if $\loc\in\locs_\mathrm{a}$. 
\end{compactitem}
\end{definition}
W.l.o.g, we assume that $\locs\subseteq\Nset_0$. 
Intuitively, labels in $\locs_\mathrm{d}$ correspond to demonic statements indicated by `$\star$'; labels in $\locs_\mathrm{p}$ correspond to 
probabilistic-branching statements indicated by `\textbf{prob}'; 
labels in $\locs_\mathrm{c}$ correspond to conditional-branching statements indicated by some propositional polynomial predicate;
labels in $\locs_\mathrm{a}$ correspond to assignments indicated by `$:=$' and the terminal label $\bot$ denotes the termination of a program. 
The transition relation $\transitions$ specifies the transitions between labels together with the additional information specific to different types of labels. 
The update functions are interpreted as follows: we first fix two linear orders  
on $\pvars$ and $\rvars$ so that $\pvars = \{x_1,\dots,x_{|\pvars|}\}$ and  $\rvars = \{r_1,\dots,r_{|\rvars|}\}$, interpreting each vector $\vec{x}\in\Rset^{|\pvars|}$ (resp. $\vec{r}\in\Rset^{|\rvars|}$) as a \emph{valuation} of program (resp. sampling) variables in the sense that the value of $x_j$ (resp. $r_j$) is $\vec{x}[j]$ (resp. $\vec{r}[j]$);
then each update function $f$ is interpreted as a function which transforms a valuation $\vec{x}\in\Rset^{|\pvars|}$ before the execution of an assignment statement into 
$f(\vec{x},\vec{r})$ after the execution of the assignment statement, where 
$\vec{r}$ is the valuation on $\rvars$ obtained from a sampling before the execution of 
the assignment statement. 

It is intuitively clear that any probabilistic program can be naturally transformed into a CFG.
Informally, each label represents a program location in an execution of a probabilistic program 
for which the statement of the program location is the next to be executed. 
A detailed construction is provided in Appendix~\ref{app:cfg}. 

\begin{example}\label{ex:runningcfg}
The control flow graph of the running example (Example~\ref{ex:runningexample}) is depicted in Fig.~\ref{fig:runningcfg}, where
vertices correspond to labels specified in Fig.~\ref{fig:runningexample}. 
\end{example}

Now we present the semantics of probabilistic programs. 
In the rest of the section, we fix a probabilistic program $P$ with the set $\pvars = \{x_1,\dots,x_{|\pvars|}\}$ of program variables and the set $\rvars = \{r_1,\dots,r_{|\rvars|}\}$ of sampling variables, 
and let $\mathcal{G}=(\locs,\bot,(\pvars,\rvars),\transitions)$ be its associated CFG. 
We also fix $\loc_0$ and resp. $\initval$ to be the label corresponding to the first statement to be executed in $P$ and resp. the initial valuation of program variables.

\paragraph*{The Semantics.} A \emph{configuration} (for $P$) is a tuple $(\loc,\vec{x})$ where $\loc\in\locs\cup\{\bot\}$ and $\vec{x}\in\Rset^{|\pvars|}$. 
A \emph{finite path} (of $P$) is a finite sequence of configurations $(\loc_0,\initval),\cdots,(\loc_k,\vec{x}_k)$ such that for all $0 \leq i < k$,
either (i) $\loc_{i+1}=\loc_i=\bot$ and $\vec{x}_i=\vec{x}_{i+1}$ (i.e., the program terminates); 
or (ii)~there exist $(\loc_i,\alpha,\loc_{i+1})\in\transitions$ and $\vec{r}\in\{\vec{r}'\mid\forall r\in\rvars.\ \vec{r}'(r)\in \support{r}\}$ such that
one of the following conditions hold: 
(a)~$\loc_i\in \locs_\mathrm{p}\cup\locs_\mathrm{d}$ and $\vec{x}_i=\vec{x}_{i+1}$ (probabilistic or demonic transitions),
(b)~$\loc_i\in \locs_\mathrm{c}$, $\vec{x}_i=\vec{x}_{i+1}$ and $\vec{x}_i\models\alpha$ (conditional-branch transitions),
(c)~$\loc_i\in \locs_\mathrm{a}$ and $\vec{x}_{i+1}=\alpha(\vec{x}_i,\vec{r})$ (assignment transitions).
A \emph{run} (of $P$) is an infinite sequence of configurations whose all finite prefixes are finite paths over $P$.
A configuration $(\loc,\vec{x})$ is {\em reachable} from the initial configuration $(\loc_0,\initval)$ if there exists a finite path $(\loc_0,\initval),\cdots,(\loc_k,\vec{x}_k)$ such that $(\loc,\vec{x})=(\loc_k,\vec{x}_k)$.

The probabilistic feature of $P$ can be captured by constructing a suitable probability measure over the set of all its runs. 
However, before this can be done, nondeterminism in $P$ needs to be resolved by some \emph{scheduler}. 

\begin{definition}[Scheduler]\label{def:scheduler}
A \emph{scheduler} (for $P$) is a function which assigns to every finite path $(\loc_0,\initval),\dots,(\loc_k,\vec{x}_k)$ with $\loc_k\in \locs_\mathrm{d}$ a transition in $\transitions$ with source label $\loc_k$. 
\end{definition}

The behaviour of $P$ under a scheduler $\sigma$ is standard: at each step, $P$ first samples a real number for each sampling variable and then evolves to the next step according to its CFG or the scheduler choice
(the details are in Appendix~\ref{app:semanticsdetails}). 
In this way, the scheduler and random choices/samplings produce a run 
over $P$. 
Moreover, each scheduler $\sigma$ induces a unique probability measure 
$\probm^{\sigma}$ over the runs of $P$. 
In the rest of the paper, we will use $\expv^{\sigma}(\cdot)$ to denote the 
expected values of random variables under $\probm^{\sigma}$. 

\smallskip\noindent{\em Random Variables and Filtrations over Runs.} 
We define the following (vectors of) random variables on the set of runs of $P$: 
$\{\theta^P_n\}_{n\in\Nset_0},~\{\overline{\vec{x}}^P_{n}\}_{n\in\Nset_0}$ and $\{\overline{\vec{r}}^P_{n}\}_{n\in\Nset_0}$~:
each $\theta^P_n$ is the random variable representing the (integer-valued) label at the $n$-th step;
each $\overline{\vec{x}}^P_{n}$ is the vector of random variables such that each $\overline{\vec{x}}^P_{n}[i]$ is the random variable representing the value of the program variable $x_i$ at the $n$-th step;
and each $\overline{\vec{r}}^P_{n}[i]$ is the random variable representing the sampled value of the sampling variable $r_i$ at the $n$-th step. 
The filtration $\{\mathcal{H}^P_n\}_{n\in\mathbb{N}_0}$ is defined such that 
each $\sigma$-algebra $\mathcal{H}^P_n$ is the smallest $\sigma$-algebra that makes all random variables in $\{\theta^P_k\}_{0\le k\le n}$ and $\{\overline{\vec{x}}^P_{k}\}_{0\le k\le n}$ measurable. 
We will omit the superscript $P$ in all the notations above if it is clear from the context. 

\begin{remark}
Under the condition that each sampling variable is bounded, using an inductive argument it follows that each $\overline{\vec{x}}_{n}$ is a vector of bounded random variables. 
Thus $\expv^\sigma({|}{\overline{\vec{x}}_n[i]}{|})$ exists for each random variable $\overline{\vec{x}}_n[i]$.   
\end{remark}

Below we define the notion of \emph{polynomial invariants} which logically captures all reachable configurations. 
A polynomial invariant may be obtained through abstract interpretation~\cite{DBLP:conf/popl/CousotC77}. 

\begin{definition}[Polynomial Invariant]
A \emph{polynomial invariant} (for $P$) is a function $\inv$ assigning a propositional polynomial predicate over $\pvars$ to every label in $\mathcal{G}$ such that for all configurations $(\loc,\vec{x})$ reachable from $(\loc_0,\initval)$ in $\mathcal{G}$, it holds that $\vec{x}\models \inv(\loc)$.
\end{definition}

\section{Termination over Probabilistic Programs}\label{sect:terminationquestion}
In this section, we first define the notions of almost-sure/finite termination 
and concentration bounds over probabilistic programs, and then describe the computational problems 
studied in this paper.  
Below we fix a probabilistic program $P$ with its associated CFG $\mathcal{G}=(\locs,\bot,(\pvars,\rvars),\transitions)$ 
and an initial configuration $(\loc_0,\initval)$ for $P$. 

\begin{definition}[Termination~\cite{BG05,HolgerPOPL,DBLP:conf/popl/ChatterjeeFNH16}]
A run $\omega=\{(\loc_n,\vec{x}_n)\}_{n\in\Nset_0}$ over $P$ is \emph{terminating} if  $\loc_n=\bot$ for some $n\in\Nset_0$.
The \emph{termination time} of $P$ is a random variable $T_P$ such that for each run $\omega=\{(\loc_n,\vec{x}_n)\}_{n\in\Nset_0}$, $T_P(\omega)$ is the least number $n$ such that $\loc_n=\bot$ 
if such $n$ exists, and $\infty$ otherwise. 
The program $P$ is said to be \emph{almost-sure terminating} (resp. \emph{finitely terminating}) if $\probm^\sigma(T_P<\infty)=1$ 
(resp. $\expv^\sigma(T_P)<\infty$) for all schedulers $\sigma$ (for $P$). 
\end{definition}
Note that $\expv^\sigma(T_P)<\infty$ implies that 
$\probm^\sigma(T_P<\infty)=1$, but the converse does not necessarily hold
(see~\cite[Example~5]{SriramCAV} for an example).
i.e., finite-termination implies almost-sure termination, but not vice-versa.  
To measure the expected values of the termination time under all (demonic) schedulers, we further define the quantity $\Eval(P):=\sup_{\sigma}\expv^{\sigma}(T_P)$~.

\begin{definition}[Concentration on Termination Time~\cite{DBLP:conf/sas/Monniaux01,DBLP:conf/popl/ChatterjeeFNH16}]
A \emph{concentration bound} for $P$ is a non-negative integer $M$ such that there exist real constants 
$c_1\ge 0$ and $c_2>0$, and for all $N \geq M$ we have 
$\probm(T_P>N)\le c_1\cdot e^{-c_2\cdot N}$ . 
\end{definition}
Informally, a concentration bound characterizes exponential decrease of probability values of non-termination beyond the bound. 
On one hand, it can be used to give an upper bound on probability of non-termination beyond a large step; 
and on the other hand, it leads to an algorithm that approximates  $\Eval(P)$ (cf.~\cite[Theorem 5]{DBLP:conf/popl/ChatterjeeFNH16}). 

In this paper, we consider the algorithmic analysis of the following problems: 
\begin{compactitem}
\item \textbf{Input:} a probabilistic program $P$, a polynomial invariant $\inv$ for $P$ and an initial configuration $(\loc_0,\initval)$ for $P$;
\item \textbf{Output: (Almost-Sure/Finite Termination)} ``$\mbox{\sl yes}$'' if the algorithm finds that $P$ is almost-sure/finite terminating and ``$\mbox{\sl fail}$'' otherwise;
\item \textbf{Output: (Concentration on Termination)} a concentration bound if the algorithm finds one and  ``$\mbox{\sl fail}$'' otherwise. 
\end{compactitem}

\section{Polynomial Ranking-Supermartingale}\label{sect:prsm}

In this section, we develop the notion of polynomial ranking-supermartingale which is an extension of linear ranking-supermartingale~\cite{SriramCAV,DBLP:conf/popl/ChatterjeeFNH16}.
We fix a probabilistic program $P$, a polynomial invariant $I$ for $P$ and an initial configuration $(\loc_0,\initval)$ for $P$.
Let $\mathcal{G}=(\locs,\bot,(\pvars,\rvars),\transitions )$ be the associated CFG of $P$, with $\pvars = \{x_1,\dots,x_{|\pvars|}\}$ and $\rvars = \{r_1,\dots,r_{|\rvars|}\}$. 
We first present the general notion of \emph{ranking supermartingale}, and then define that of \emph{polynomial ranking supermartingale}.

\begin{definition}[Ranking Supermartingale~\cite{HolgerPOPL,DBLP:conf/popl/ChatterjeeFNH16}]\label{def:rsm}
A discrete-time stochastic process $\{X_n\}_{n\in\Nset_0}$ w.r.t a
filtration $\{\mathcal{F}_n\}_{n\in\mathbb{N}_0}$ is a \emph{ranking supermartingale} (RSM) if there exist $K<0$ and $\epsilon>0$ such that for all $n\in\mathbb{N}_0$, we have $\expv(|X_n|)<\infty$ and it holds almost surely (with probability~$1$) that
$X_n\ge K$\mbox{ and }$\expv(X_{n+1}\mid \mathcal{F}_n)\le X_n-\epsilon\cdot\mathbf{1}_{X_n\ge 0}$\enskip,
where $\expv(X_{n+1}\mid \mathcal{F}_n)$ is the conditional expectation of $X_{n+1}$ given  $\mathcal{F}_n$ (cf.~\cite[Chapter~9]{probabilitycambridge}).
\end{definition}

Informally, a polynomial ranking-supermartingale over $P$ is a polynomial instantiation of 
an RSM through certain function 
$\eta:(\locs\cup\{\bot\})\times \Rset^{|\pvars|}\rightarrow\Rset$ which satisfies that each $\eta(\loc,\cdot)$ (for all $\loc\in\locs\cup\{\bot\}$) is essentially a polynomial function over $\pvars$.
Given such a function $\eta$, the intuition is to have conditions that make 
the stochastic process $X_n=\eta(\theta_n,\overline{\vec{x}}_n)$ an RSM.
To ensure this, we consider the conditional expectation $\expv^\sigma\left(X_{n+1}\mid\mathcal{H}_n\right)$;
this is captured by an extension of \emph{pre-expectation}~\cite{SriramCAV,DBLP:conf/popl/ChatterjeeFNH16} from the linear to the polynomial case.
Below we define $\locs_{\bot}:=\locs\cup\{\bot\}$.
For a function $g:\Rset^{|\pvars|}\times\Rset^{|\rvars|}\rightarrow\Rset$, we let $\expv_R(g,\cdot):\Rset^{|\pvars|}\rightarrow\Rset$ be the function such 
that each $\expv_R(g,\vec{x})$ is the expected value $\expv(g(\vec{x},\hat{\vec{r}}))$,
where $\hat{\vec{r}}$ is any vector of independent random variables such that each $\hat{\vec{r}}[i]$ is a random variable that respects the cumulative 
distribution function $\Upsilon_{r_i}$.

\begin{definition}[Pre-Expectation]\label{def:preexpectation}
Let $\eta: \locs_\bot\times\mathbb{R}^{|\pvars|}\rightarrow\mathbb{R}$ be a function such that
each $\eta(\loc,\cdot)$ (for all $\loc\in\locs_\bot$) is a polynomial function over $\pvars$.
The function $\mathrm{pre}_\eta: \locs_\bot\times\mathbb{R}^{|\pvars|}\rightarrow\mathbb{R}$ is defined by:
\begin{compactitem}
\item $\mathrm{pre}_\eta(\loc,\vec{x}):=\sum_{(\loc,z,\loc')\in\transitions} z\cdot \eta\left(\loc',\vec{x}\right)$ if $\loc\in\locs_\mathrm{p}$ 
(probabilistic transitions);
\item $\mathrm{pre}_\eta(\loc,\vec{x}):=\max_{(\loc,\star,\loc')\in\transitions}\eta(\loc',\vec{x})$ if $\loc\in\locs_\mathrm{d}$ (nondeterministic transitions);
\item $\mathrm{pre}_\eta(\loc,\vec{x}):=\eta(\loc',\vec{x})$ if $\loc\in\locs_\mathrm{c}$ and $(\loc,\phi,\loc')$ is the only transition in $\transitions$ such that $\vec{x}\models\phi$ (conditional transitions);
\item $\mathrm{pre}_\eta(\loc,\vec{x}):=\expv_{\rvars}\left(g,\vec{x}\right)$
if $\loc\in\locs_{\mathrm{a}}$, where $g$ is the function such that 
$g(\vec{x},\vec{r})=\eta\left(\loc',f(\vec{x},\vec{r})\right)$ and $(\loc,f,\loc')$ 
is the only transition in $\transitions$ (assignment transitions); and 
\item $\mathrm{pre}_\eta(\loc,\vec{x}):=\eta(\loc,\vec{x})$ if $\loc=\bot$ (terminal location).
\end{compactitem}
\end{definition}
The following lemma establishes the relationship between pre-expectation and conditional expectation whose proof is in Appendix~\ref{sect:prsmproof}.

\begin{lemma}\label{lemma:condexp}
Let $\eta: \locs_\bot\times\mathbb{R}^{|\pvars|}\rightarrow\mathbb{R}$ be a function such that
each $\eta(\loc,\cdot)$ (for all $\loc\in\locs_\bot$) is a polynomial function over $\pvars$, and $\sigma$ be any scheduler.
Let the stochastic process $\{X_n\}_{n\in\mathbb{N}_0}$ be defined by:
$X_{n}:=\eta(\theta_{n},\overline{\vec{x}}_{n})$.
Then for all $n\in\mathbb{N}_0$, we have $\expv^{\sigma}(X_{n+1}\mid\mathcal{H}_n)\le\mathrm{pre}_\eta(\theta_{n},\overline{\vec{x}}_{n})$.
\end{lemma}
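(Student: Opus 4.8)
The plan is to prove Lemma~\ref{lemma:condexp} by unfolding the definition of conditional expectation and performing a case analysis on the type of the label $\theta_n$, matching each case to the corresponding clause in Definition~\ref{def:preexpectation}. First I would condition on the event that $\theta_n=\loc$ and $\overline{\vec{x}}_n=\vec{x}$ for a fixed configuration $(\loc,\vec{x})$; since $\mathcal{H}_n$ is generated by the random variables $\{\theta_k\}_{0\le k\le n}$ and $\{\overline{\vec{x}}_k\}_{0\le k\le n}$, the conditional expectation $\expv^{\sigma}(X_{n+1}\mid\mathcal{H}_n)$ is a $\mathcal{H}_n$-measurable function, and it suffices to compute its value on each such atom. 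The core observation is that, given the history up to step $n$, the value $X_{n+1}=\eta(\theta_{n+1},\overline{\vec{x}}_{n+1})$ depends only on the transition taken out of $\loc$ (determined by the CFG, the scheduler $\sigma$, and the probabilistic/sampling choices) and on the fresh sampling $\overline{\vec{r}}_n$, which is independent of $\mathcal{H}_n$.

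The bulk of the argument is then the case split on the label type. For $\loc\in\locs_\mathrm{p}$, the next label $\loc'$ is chosen with probability $z$ for each $(\loc,z,\loc')\in\transitions$, and the program variables are unchanged, so the conditional expectation equals $\sum_{(\loc,z,\loc')\in\transitions} z\cdot\eta(\loc',\vec{x})=\mathrm{pre}_\eta(\loc,\vec{x})$, giving equality. For $\loc\in\locs_\mathrm{c}$, the unique transition whose guard $\vec{x}$ satisfies is taken deterministically and the variables are unchanged, so again equality holds. For $\loc\in\locs_\mathrm{a}$, the next valuation is $f(\vec{x},\overline{\vec{r}}_n)$ with $\overline{\vec{r}}_n$ distributed according to the $\Upsilon_{r_i}$ and independent of $\mathcal{H}_n$; taking expectation over the sampling yields $\expv_{\rvars}(g,\vec{x})=\mathrm{pre}_\eta(\loc,\vec{x})$, using that integration against the product distribution factors out cleanly by independence, so here too equality holds. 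The inequality (rather than equality) in the statement is forced by the demonic case $\loc\in\locs_\mathrm{d}$: here the scheduler $\sigma$ picks \emph{some} successor $\loc'$ as a function of the finite path, so $\expv^{\sigma}(X_{n+1}\mid\mathcal{H}_n)=\eta(\sigma(\text{path}),\vec{x})\le\max_{(\loc,\star,\loc')\in\transitions}\eta(\loc',\vec{x})=\mathrm{pre}_\eta(\loc,\vec{x})$, and the $\max$ is exactly what makes $\mathrm{pre}_\eta$ a sound upper bound over all scheduler choices. Finally, for $\loc=\bot$ the process stays put, $X_{n+1}=\eta(\bot,\vec{x})=X_n$, matching the terminal clause.

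The main obstacle I anticipate is the measure-theoretic bookkeeping rather than any single case computation: I must justify that the conditional expectation can be evaluated atom-by-atom and that the fresh sampling $\overline{\vec{r}}_n$ is genuinely independent of $\mathcal{H}_n$, which requires appealing to the construction of $\probm^{\sigma}$ in the semantics (Appendix~\ref{app:semanticsdetails}) and to the definition of $\mathcal{H}_n$ as the $\sigma$-algebra generated by the labels and program-variable valuations up to step $n$ (crucially \emph{not} including $\overline{\vec{r}}_n$). I would also need to invoke the Remark that each $\overline{\vec{x}}_n$ is a bounded random variable, so that $\expv^\sigma(|X_n|)<\infty$ and all the conditional expectations are well defined and finite, allowing the interchange of expectation and the finite sums over transitions. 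Once independence of the sampling and measurability of the history are set up, each of the five cases reduces to the corresponding line of Definition~\ref{def:preexpectation} by a direct computation, and combining them over the measurable partition of the sample space by the value of $(\theta_n,\overline{\vec{x}}_n)$ yields the claimed almost-sure inequality for all $n\in\mathbb{N}_0$.
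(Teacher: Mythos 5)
Your proposal is correct and follows essentially the same route as the paper's proof: a case analysis over label types in which the history-measurable terms (terminal, conditional, demonic) pass through the conditional expectation unchanged, the probabilistic and assignment cases use independence of the Bernoulli choices and the fresh samples $\overline{\vec{r}}_n$ from $\mathcal{H}_n$, and the inequality arises solely from the demonic case, where the scheduler's path-dependent choice is bounded by the $\max$ in $\mathrm{pre}_\eta$. The paper organizes the identical argument globally, writing $X_{n+1}$ as an indicator-weighted sum over label types and applying linearity of conditional expectation together with the measurability/independence facts you cite; this indicator formulation is exactly the rigorous version of your atom-by-atom evaluation (individual configurations $(\loc,\vec{x})$ generally have probability zero, so one freezes the $\mathcal{H}_n$-measurable arguments rather than conditioning on atoms).
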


\begin{example}\label{ex:preexpectation}
Consider the running example in Example~\ref{ex:runningexample} with CFG in Fig.~\ref{fig:runningcfg}. 
Let $\eta$ be the function specified in the second and fifth column of Table~\ref{tab:runningexample}, where $g(x):=(x-1)(10-x)$.
Then $\mathrm{pre}_\eta$ is given in the third and sixth column of Table~\ref{tab:runningexample}. 
Note that the case for $i=2$ is obtained from $\mathrm{pre}_\eta(2, x)=\max\{g(x)+9.6,g(x)+9.6\}$, and 
the case for $i=3$ is from $\mathrm{pre}_\eta(3, x)=\expv_R(h , x)$, where $h$ is the function
$h(y,r)= g(y)-(2y-11)r-r^2+10$.
\end{example}

\begin{table}
\begin{center}
\begin{tabular}{|c|c|c|c|c|c|}
\hline
\ \ $i$ \ \ & $\eta(i,x)$ & $\mathrm{pre}_\eta(i,x)$ & \ \ $i$ \ \ & $\eta(i,x)$ & $\mathrm{pre}_\eta(i,x)$    \\
\hline
\multirow{2}{*}{$1$} & \multirow{2}{*}{$g(x)+10$} & \ \ $\mathbf{1}_{1\le x\le 10}\cdot (g(x)+9.8)$ \ \  &  \multirow{2}{*}{$5$}  & \multirow{2}{*}{$g(x)+2x-1.8$} & \multirow{2}{*}{$g(x)+2x-2$} \\
  &  &   ${}+\mathbf{1}_{x < 1\vee x> 10}\cdot (-0.2)$ & & & \\
\hline
$2$ & \ \ $g(x)+9.8$ \ \ & $g(x)+9.6$ & $6$  & \ \ $g(x)-2x+20.2$ \ \ & \ \ $g(x)-2x+20$ \ \  \\
\hline
$3$ & $g(x)+9.6$ & $g(x)+9$ & $7$  & $-0.2$ & $-0.2$ \\
\hline
$4$  &  $g(x)+9.6$  &   $g(x)+0.04x+8.98$  & & & \\
\hline                       
\end{tabular}
\end{center}
\caption{$\eta$ and $\mathrm{pre}_\eta$ for Example~\ref{ex:runningexample} and Fig.~\ref{fig:runningcfg}}
\label{tab:runningexample}
\end{table}

We now define the notion of polynomial ranking-supermartingale.
The intuition is that we encode the RSM-difference condition as a logical formula,
treat zero as the threshold between terminal and non-terminal labels,
and use the invariant $I$ to over-approximate the set of reachable configurations at each label. 
Below for each $\loc\in\locs_\mathrm{c}$, we define $\TERM{\loc}$ to be the propositional polynomial predicate $\bigvee_{(\loc,\phi,\loc')\in\transitions, \loc'\ne\bot}\phi$;
and for $\loc\in\locs\backslash\locs_\mathrm{c}$, we let $\TERM{\loc}:=\TRUE$.

\begin{definition}[Polynomial Ranking-Supermartingale]\label{def:prsm}
A \emph{$d$-degree polyonomial ranking-supermartingale map} (in short, $d$-pRSM) w.r.t $(P,I)$
is a function $\eta: \locs_\bot\times\Rset^{|\pvars|}\rightarrow\Rset$ satisfying that
there exist $\epsilon>0$ and $K\le-\epsilon$ such that for all $\loc\in\locs_\bot$ and all $\vec{x}\in\mathbb{R}^{|\pvars|}$, the conditions (C1-C4) hold:
\begin{itemize}\itemsep1pt \parskip0pt \parsep0pt
\item {\em C1:} the function $\eta(\loc,\cdot):\mathbb{R}^{|\pvars|}\rightarrow\mathbb{R}$
is a polynomial over $\pvars$ of order at most $d$;
\item {\em C2:} if $\loc\ne\bot$ and $\vec{x}\models I(\loc)$, then $\eta(\loc,\vec{x})\ge 0$;
\item {\em C3:} if $\loc=\bot$, then $\eta(\loc, \vec{x})=K$;
\item {\em C4:} if $\loc\ne\bot$ and $\vec{x}\models I(\loc)\wedge\TERM{\loc}$, then $\mathrm{pre}_\eta(\loc,\vec{x})\le\eta(\loc,\vec{x})-\epsilon$ \enskip.
\end{itemize}
\end{definition}
Note that C2 and C3 together separate non-termination and termination by the threshold $0$, and C4 is the {\em RSM difference} condition which is intuitively
related to the $\epsilon$ difference in the RSM definition
(cf. Definition~\ref{def:rsm}).
By generalizing our previous proofs in~\cite{DBLP:conf/popl/ChatterjeeFNH16} (from LRSM to pRSM), 
we establish the soundness of pRSMs w.r.t both almost-sure and finite termination (proof in Appendix~\ref{sect:prsmproof}).

\begin{theorem}
\label{thm:supermartingale-correctness}
If there exists a $d$-pRSM $\eta$ w.r.t $(P,I)$ with constants $\epsilon,K$ (cf. Definition~\ref{def:prsm}), then $P$ is a.s. terminating and $\Eval(P)\le \UB(P):=\frac{\eta(\loc_0,\initval)-K}{\epsilon}$.
\end{theorem}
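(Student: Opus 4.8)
The plan is to instantiate the abstract ranking-supermartingale machinery of Definition~\ref{def:rsm} on the concrete stochastic process induced by $\eta$, and then to extract the expected-time bound by a stopping-time summation. I would fix an arbitrary scheduler $\sigma$, set $X_n := \eta(\theta_n,\overline{\vec{x}}_n)$ with respect to the filtration $\{\mathcal{H}_n\}$, and note that the initial configuration is deterministic so that $X_0 = \eta(\loc_0,\initval)$. The first stage shows that $\{X_n\}$ is an RSM with the \emph{same} constants $\epsilon,K$; the second stage turns the RSM property into the bound on $\expv^{\sigma}(T_P)$.

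For the RSM property I would verify the three requirements of Definition~\ref{def:rsm}. Integrability $\expv^{\sigma}(|X_n|)<\infty$ is immediate from C1 and the Remark that each $\overline{\vec{x}}_n$ is bounded, so a polynomial evaluated on it has finite expectation. For the lower bound $X_n\ge K$ almost surely I would split on whether $\theta_n=\bot$: on the terminal event C3 gives $X_n=K$, and off it the invariant, which holds almost surely at every reachable configuration so that $\overline{\vec{x}}_n\models I(\theta_n)$ a.s., together with C2 gives $X_n\ge 0>K$ (using $K\le-\epsilon<0$). The heart is the difference inequality. By Lemma~\ref{lemma:condexp} it suffices to prove $\mathrm{pre}_\eta(\theta_n,\overline{\vec{x}}_n)\le X_n-\epsilon\cdot\mathbf{1}_{X_n\ge 0}$ pointwise a.s., which I would establish by cases: if $\theta_n=\bot$ the terminal clause of $\mathrm{pre}_\eta$ and C3 give $\mathrm{pre}_\eta=K=X_n$ while $\mathbf{1}_{X_n\ge 0}=0$; if $\theta_n\ne\bot$ and $\overline{\vec{x}}_n\models\TERM{\theta_n}$ this is exactly C4, with C2 forcing the indicator to be $1$; and the only remaining case, a conditional label whose satisfied guard escapes to $\bot$, gives $\mathrm{pre}_\eta=K\le-\epsilon\le X_n-\epsilon$ since C2 yields $X_n\ge 0$. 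Combining with Lemma~\ref{lemma:condexp} shows $\{X_n\}$ is an RSM.

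For the bound I would pass to the stopped process $X_{n\wedge T}$ with $T:=T_P$, a stopping time since each $\theta_k$ is $\mathcal{H}_k$-measurable. On $\{T>n\}$ the configuration is non-terminal, so the difference condition reads $\expv^{\sigma}(X_{n+1}\mid\mathcal{H}_n)\le X_n-\epsilon$, while on $\{T\le n\}$ the stopped process is constant; together this gives $\expv^{\sigma}(X_{(n+1)\wedge T}\mid\mathcal{H}_n)\le X_{n\wedge T}-\epsilon\cdot\mathbf{1}_{T>n}$. Taking expectations and telescoping from $0$ to $N-1$, then using $X_{N\wedge T}\ge K$, yields $\epsilon\sum_{n=0}^{N-1}\probm^{\sigma}(T>n)\le X_0-K$. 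Letting $N\to\infty$ and invoking $\expv^{\sigma}(T_P)=\sum_{n\ge 0}\probm^{\sigma}(T_P>n)$ gives $\expv^{\sigma}(T_P)\le\frac{\eta(\loc_0,\initval)-K}{\epsilon}$, and finiteness forces $\probm^{\sigma}(T_P<\infty)=1$. Since $\sigma$ was arbitrary, taking the supremum over schedulers gives both almost-sure termination and $\Eval(P)\le\UB(P)$.

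The main obstacle I anticipate is the difference step, specifically pinning down the boundary between terminating and non-terminating behaviour: one must see that $\TERM{\cdot}$ exactly separates the transitions governed by C4 from the single escaping case into $\bot$, and that the $\max$ over demonic successors hidden inside $\mathrm{pre}_\eta$ is subsumed uniformly by C4. The secondary point requiring care is justifying that the invariant holds almost surely along runs, so that C2 and C4 may legitimately be invoked at $(\theta_n,\overline{\vec{x}}_n)$; the rest is a routine telescoping argument.
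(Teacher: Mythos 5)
Your proof is correct, and its first stage coincides with the paper's: the paper also sets $X_n=\eta(\theta_n,\overline{\vec{x}}_n)$, invokes Lemma~\ref{lemma:condexp} together with C4 and $K\le-\epsilon$ to conclude that $\{X_n\}$ is an RSM, and your case analysis (terminal label, non-terminal label satisfying $\TERM{\cdot}$, and the escape-to-$\bot$ case where $\mathrm{pre}_\eta=K\le X_n-\epsilon$ by C2) is exactly the gap-filling needed to make that terse step rigorous. Where you genuinely diverge is the second stage: the paper does not prove the passage from ``RSM'' to ``almost-sure termination with $\expv(T_P)\le(\expv(X_0)-K)/\epsilon$'' at all, but cites it as Proposition~\ref{prop:rsm} (a result imported from~\cite{HolgerPOPL,DBLP:conf/popl/ChatterjeeFNH16} about the stopping time $Z=\min\{n\mid X_n<0\}$, which is then identified with $T_P$ via C2 and C3). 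You instead re-derive this fact from scratch: you treat $T_P$ directly as a stopping time, establish $\expv^{\sigma}(X_{(n+1)\wedge T}\mid\mathcal{H}_n)\le X_{n\wedge T}-\epsilon\cdot\mathbf{1}_{T>n}$, telescope, use the uniform lower bound $X_{N\wedge T}\ge K$, and conclude via $\expv^{\sigma}(T_P)=\sum_{n\ge0}\probm^{\sigma}(T_P>n)$. Your route buys self-containedness and avoids the detour through $Z$ (and the need to argue $T_P\le Z$ from C2/C3), at the cost of redoing a known argument; the paper's route buys brevity and modularity by reusing the established proposition. Both are sound, and your telescoping steps are justified at each point ($\{T>n\}\in\mathcal{H}_n$, monotone convergence for the final limit, and divergence of the sum whenever $\probm^{\sigma}(T_P=\infty)>0$, which forces almost-sure termination from finiteness of the bound).
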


\begin{example}
\label{ex:prsm}
Consider the running example (cf. Example~\ref{ex:runningexample} and Example~\ref{ex:runningcfg}) and the function $\eta$ given in Example~\ref{ex:preexpectation}.
Assuming that the initial valuation satisfies $1\le x\wedge x\le 10$, 
we assign the trivial invariant $I$ such that $I(1)=0\le x\wedge x\le 11$,
$I(j)=1\le x\wedge x\le 10$ for $2\le j\le 6$ and $I(7)=x<1\vee x>10$. It is straightforward to verify that $\eta$ is a $2$-pRSM 
with $\epsilon=0.2$ and $K=-0.2$ (cf. Definition~\ref{def:prsm} for $\epsilon, K$).  
Hence by Theorem~\ref{thm:supermartingale-correctness}, the program in Example~\ref{ex:runningexample} terminates almost-surely under any scheduler 
and its expected termination time is at most $5\cdot(x_0-1)\cdot(10-x_0)+51$, given the initial value $x_0$. 
\end{example}

\begin{remark}\label{rmk:linearnotsuffice}
The running example (cf. Example~\ref{ex:runningexample} and Example~\ref{ex:runningcfg}) does not admit a linear (i.e. $1$-) pRSM since $\expv_R(r)=0$ at label $3$.
This indicates that linear pRSMs may not exist even over simple affine programs like Example~\ref{ex:runningexample}. 
Thus, this motivates the study of pRSMs even for simple affine programs.
\end{remark}

\begin{remark}\label{rmk:prsmextra}
The non-strict inequality symbol `$\ge$' in C2 can be replaced by its strict counterpart `$>$' since $\eta+c$ ($c>0$) remains to be a pRSM if $\eta$ is a pRSM and $K$ (in C3) is sufficiently small.
(By definition, $\mathrm{pre}_{\eta+c}=\mathrm{pre}_\eta+c$.)
And the non-strict inequality symbol `$\le$' in C4 can be replaced by `$<$' since a pRSM $\eta$ and a constant $K$ (for C3) can be scaled by a constant factor (e.g. $1.1$) so that strict inequalities are ensured.
Moreover, one can also assume that $K=-1$ and $\epsilon=1$ in Definition~\ref{def:prsm}.
This is because one can first scale a pRSM with constants $\epsilon, K$ by a positive scalar to ensure that $\epsilon=1$, and then safely set $K=-1$ due to C2.
\end{remark}

Theorem~\ref{thm:supermartingale-correctness} answers the questions of almost-sure and finite termination in a unified fashion.
Generalizing our approach in~\cite{DBLP:conf/popl/ChatterjeeFNH16}, we show that by restricting a pRSM to have \emph{bounded difference}, we also obtain concentration results.

\begin{definition}[Difference-Bounded pRSM]\label{def:dbprsm} A $d$-pRSM $\eta$ is \emph{difference-bounded} w.r.t a non-empty interval $[a,b]\subseteq\mathbb{R}$ if the following conditions hold:
\begin{compactitem}
\item for all $\loc\in\locs_\mathrm{d}\cup\locs_\mathrm{p}$ and $(\loc,\alpha,\loc')\in\transitions$, and for all $\vec{x}\in \SAT{I(\loc)}$,
it holds that $a\le \eta(\loc',\vec{x})-\eta(\loc,\vec{x})\le b$;
\item for all $\loc\in\locs_\mathrm{c}$ and $(\loc,\phi,\loc')\in\transitions$, and for all $\vec{x}\in \SAT{I(\loc)\wedge\phi}$, it holds that $a\le \eta(\loc',\vec{x})-\eta(\loc,\vec{x})\le b$;
\item for all $\loc\in\locs_\mathrm{a}$ and $(\loc,f,\loc')\in\transitions$, for all $\vec{x}\in \SAT{I(\loc)}$ and for all $\vec{r}\in\{\vec{r}'\mid \forall r\in \rvars.\ \vec{r}'[r]\in\supp_r\}$, it holds that $a\le \eta(\loc',f(\vec{x},\vec{r}))-\eta(\loc,\vec{x})\le b$.
\end{compactitem}
\end{definition}
Note that if a $d$-pRSM $\eta$ with constants $\epsilon,K$ (cf. Definition~\ref{def:prsm}) is difference-bounded w.r.t $[a,b]$, then from definition $a\le -\epsilon$; one can further assume that $-\epsilon\le b$ since otherwise one can reset $\epsilon:=-b$. 
By definition, the stochastic process $X_n:=\eta(\theta_n, \overline{\vec{x}}_n)$ defined through
a difference-bounded pRSM w.r.t $[a,b]$ satisfies that $a\le X_{n+1}-X_n\le b$;
then using Hoeffding's Inequality~\cite{Hoeffding1963inequality,DBLP:conf/popl/ChatterjeeFNH16},
we establish a concentration bound.

\begin{theorem}
\label{thm:concentration}
Let $\eta$ be a difference-bounded $d$-pRSM w.r.t $[a,b]$ with constants $\epsilon$ and $K$. 
For all $n\in\mathbb{N}$, if $\epsilon(n-1)>\eta(\loc_0,\initval)$, then
$\probm(T_P > n)\le e^{-\frac{2(\epsilon(n-1)-\eta(\loc_0,\initval))^2}{(n-1)(b-a)^2}}$\enskip.
\end{theorem}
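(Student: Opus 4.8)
The plan is to build from $\eta$ a single real-valued supermartingale with increments confined to a length-$(b-a)$ interval, to bound the non-termination event $\{T_P>n\}$ by a one-sided deviation of that process, and then to invoke the Azuma--Hoeffding form of Hoeffding's Inequality. Fix an arbitrary scheduler $\sigma$; since the $\max$ in the demonic clause of $\mathrm{pre}_\eta$ dominates every scheduler choice, the bound obtained is uniform in $\sigma$ (so writing $\probm^{\sigma}$ matches the statement's $\probm$). Write $T:=T_P$, set $X_n:=\eta(\theta_n,\overline{\vec{x}}_n)$, and define the stopped, time-shifted process
\[
Y_n \;:=\; X_{\min(n,T)}+\epsilon\cdot\min(n,T).
\]
Since each $\overline{\vec{x}}_n$ is a vector of bounded random variables, each $X_n$ is a polynomial evaluated at bounded arguments and hence integrable, so every $Y_n$ is integrable.

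The first main step is to show that $\{Y_n\}_{n\in\Nset_0}$ is a supermartingale w.r.t.\ $\{\mathcal{H}_n\}$ whose increments all lie in an interval of width $b-a$. As $\{T\le n\}$ is $\mathcal{H}_n$-measurable, I would verify $\expv^{\sigma}(Y_{n+1}\mid\mathcal{H}_n)\le Y_n$ separately on $\{T\le n\}$ and $\{T>n\}$. On $\{T\le n\}$ the process is frozen and the inequality is trivial. On $\{T>n\}$ we have $\theta_n\ne\bot$ and, by reachability, $\overline{\vec{x}}_n\models I(\theta_n)$, so $X_n\ge 0$ by condition C2 and $Y_{n+1}=X_{n+1}+\epsilon(n+1)$; thus it suffices to prove the one-step drift $\expv^{\sigma}(X_{n+1}\mid\mathcal{H}_n)\le X_n-\epsilon$. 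Here I combine Lemma~\ref{lemma:condexp}, giving $\expv^{\sigma}(X_{n+1}\mid\mathcal{H}_n)\le\mathrm{pre}_\eta(\theta_n,\overline{\vec{x}}_n)$, with a case split: if $\overline{\vec{x}}_n\models\TERM{\theta_n}$ (in particular whenever $\theta_n\notin\locs_\mathrm{c}$) then C4 yields $\mathrm{pre}_\eta(\theta_n,\overline{\vec{x}}_n)\le X_n-\epsilon$; otherwise the sole enabled transition from $\theta_n$ leads to $\bot$, so $\mathrm{pre}_\eta(\theta_n,\overline{\vec{x}}_n)=K\le-\epsilon\le X_n-\epsilon$ using $X_n\ge0$ and $K\le-\epsilon$ from Definition~\ref{def:prsm}. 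Either way the drift holds, so $\expv^{\sigma}(Y_{n+1}\mid\mathcal{H}_n)\le Y_n$. For the increments, difference-boundedness (Definition~\ref{def:dbprsm}) applied to the single CFG transition taken at step $n$ gives $a\le X_{n+1}-X_n\le b$, whence on $\{T>n\}$ we get $Y_{n+1}-Y_n\in[a+\epsilon,b+\epsilon]$; since $a\le-\epsilon\le b$, the frozen increment $0$ also lies in $[a+\epsilon,b+\epsilon]$, so every increment lies in this interval of width $b-a$.

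The second main step relates $\{T_P>n\}$ to a deviation of $Y$ and applies the inequality. On $\{T_P>n\}$ we have $T>n-1$, hence $Y_{n-1}=X_{n-1}+\epsilon(n-1)$, while $\theta_{n-1}\ne\bot$ with the invariant satisfied gives $X_{n-1}\ge0$ by C2; therefore $\{T_P>n\}\subseteq\{Y_{n-1}\ge\epsilon(n-1)\}$. As $Y_0=\eta(\loc_0,\initval)$, this reads $\{T_P>n\}\subseteq\{Y_{n-1}-Y_0\ge\epsilon(n-1)-\eta(\loc_0,\initval)\}$, and the threshold $\epsilon(n-1)-\eta(\loc_0,\initval)$ is positive by hypothesis. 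Applying the one-sided Azuma--Hoeffding inequality to the supermartingale $Y_0,\dots,Y_{n-1}$, whose $n-1$ increments lie in intervals of width $b-a$ (so that $\sum(b_i-a_i)^2=(n-1)(b-a)^2$), gives
\[
\probm^{\sigma}(T_P>n)\le\probm^{\sigma}\bigl(Y_{n-1}-Y_0\ge\epsilon(n-1)-\eta(\loc_0,\initval)\bigr)\le e^{-\frac{2(\epsilon(n-1)-\eta(\loc_0,\initval))^2}{(n-1)(b-a)^2}},
\]
which is exactly the claimed bound, uniformly in $\sigma$.

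The step I expect to be the main obstacle is the one-step drift on the \emph{terminating} transition. Condition C4 supplies the $-\epsilon$ drift only away from the terminal guard, so at the conditional label from which the run exits to $\bot$ it is silent; one must instead lean on the threshold separation encoded by C2 and C3 ($X_n\ge0$ above the threshold, $X_\bot=K\le-\epsilon$ below it) to recover the supermartingale inequality, as above. Pinning down the stopped process and its increment interval precisely — so that the frozen increment $0$ is consistent with the active increments and the common width is \emph{exactly} $b-a$ — is what yields the sharp factor $2/(b-a)^2$; a cruder Azuma estimate would lose it. The residual measurability and integrability points ($\mathcal{H}_n$-measurability of $\{T>n\}$ and integrability of each $X_n$) are routine given the boundedness of each $\overline{\vec{x}}_n$.
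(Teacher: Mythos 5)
Your proof is correct and takes essentially the same route as the paper: the paper likewise forms the compensated process $Y_n = X_n + \epsilon\cdot(\min\{T_P,n\}-1)$, shows it is a supermartingale whose increments lie in $[a+\epsilon,b+\epsilon]$ (using $a\le-\epsilon\le b$ to absorb the frozen, post-termination increment $0$), reduces $\{T_P>n\}$ to a one-sided deviation event via the threshold separation of C2/C3, and concludes with Hoeffding's inequality. Your inline case split for the one-step drift at the exiting (terminal) transition, via $\mathrm{pre}_\eta=K\le-\epsilon$, is precisely the argument the paper's Theorem~\ref{thm:supermartingale-correctness} uses to establish that $\{X_n\}$ is an RSM (a fact its concentration proof then simply reuses), and your bookkeeping of reading the deviation at step $n-1$ over $n-1$ increments matches the stated $(n-1)(b-a)^2$ denominator.
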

From Theorem~\ref{thm:concentration}, a difference-bounded $d$-pRSM $\eta$ implies a concentration bound of $\frac{\eta(\loc_0,\initval)}{\epsilon}+2$
(detailed proof of the theorem is in Appendix~\ref{app:concentration}). 

\begin{example}
Consider again our running example in Example~\ref{ex:runningexample} with invariant given in Example~\ref{ex:prsm}. 
Let $\eta$ be the function illustrated in Table~\ref{tab:runningexample}. 
One can verify that the interval $[-10.2 , 8.6]$ satisfies the conditions specified in Definition~\ref{def:dbprsm} for $\eta$, as the following hold:

\begin{compactitem}
\item for all $x\in [1,10]$, $\eta(2,x)-\eta(1,x)=-0.2$; 
\item for all $x\in [0,1)\cup(10,11]$, $-10.2\le\eta(7,x)-\eta(1,x)\le -0.2$;
\item for all $x\in [1,10]$ and $i\in\{3,4\}$, $\eta(i,x)-\eta(2,x)=-0.2$;
\item for all $x\in [1,10]$ and $i\in\{5,6\}$, $-9.4\le\eta(i,x)-\eta(4,x)\le 8.6$;
\item for all $x\in [1,10]$, $\eta(1,x-1)-\eta(5,x)=-0.2$;
\item for all $x\in [1,10]$, $\eta(1,x+1)-\eta(6,x)=-0.2$;
\item for all $x\in [1,10]$ and $r\in\{-1,1\}$, 
\[
-9.6\le \eta(1,x+r)-\eta(3,x)\left(=-2rx-r^2+11r+0.4\right)\le 8.4\enskip.
\]
\end{compactitem}

Then by Theorem~\ref{thm:concentration}, assuming that the program have initial value $x_0=5$, one can deduce that 
\[
\probm\left(T_P>50000\right)\le e^{-\frac{2\cdot (0.2\cdot 49999-30)^2}{49999\cdot 18.8^2}}\approx 1.3016\cdot 10^{-5}\enskip. 
\]
\end{example}

We end this section with a result stating that whether a (difference-bounded) $d$-pRSM exists can be decided 
(proof in Appendix~\ref{app:thmprsmdecision}).
However, the complexity obtained for Theorem~\ref{thm:prsmdecision} is high since it involves quantifier elimination. 
In the next section, we will present efficient algorithms for synthesizing pRSMs.

\begin{theorem}\label{thm:prsmdecision}
For any fixed natural number $d\in\Nset$, the problem whether a (difference-bounded) $d$-pRSM w.r.t an input pair $(P,I)$ exists is decidable.
\end{theorem}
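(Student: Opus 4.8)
The plan is to reduce the existence of a $d$-pRSM to the decision problem for the first-order theory of the reals, which is decidable by Tarski's quantifier-elimination theorem for real closed fields. First I would fix a \emph{template} for $\eta$: since C1 forces each $\eta(\loc,\cdot)$ to be a polynomial over $\pvars$ of degree at most $d$, the function $\eta(\loc,\cdot)$ is determined by the $\binom{|\pvars|+d}{d}$ real coefficients of its monomials. Collecting these over all $\loc\in\locs_\bot$ yields a finite tuple $\vec{c}$ of real unknowns; together with the two constants $\epsilon$ and $K$ these are the only existentially quantified variables. The goal is to write a single first-order sentence $\exists\vec{c}\,\exists\epsilon\,\exists K\,\varphi(\vec{c},\epsilon,K)$ that is true exactly when a $d$-pRSM exists.

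Next I would encode conditions C2--C4. Condition C3 ($\eta(\bot,\cdot)\equiv K$) becomes a set of linear equations forcing the non-constant coefficients at $\bot$ to vanish and the constant to equal $K$; the range constraints $\epsilon>0$ and $K\le-\epsilon$ are recorded directly. Conditions C2 and C4 each have the shape ``for all $\vec{x}$ in the semialgebraic set $\SAT{I(\loc)}$ (resp.\ $\SAT{I(\loc)\wedge\TERM{\loc}}$), a polynomial inequality in $\vec{x}$ holds,'' where the polynomial's coefficients are \emph{linear} in $(\vec{c},\epsilon,K)$; thus each is a universally quantified formula over $\vec{x}$. The only care is in expressing $\mathrm{pre}_\eta$ per label type (Definition~\ref{def:preexpectation}). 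For probabilistic labels $\mathrm{pre}_\eta$ is a fixed linear combination $\sum z\cdot\eta(\loc',\vec{x})$, already a polynomial in $\vec{x}$; for demonic labels the ``$\max$'' is eliminated by replacing the single inequality $\mathrm{pre}_\eta(\loc,\vec{x})\le\eta(\loc,\vec{x})-\epsilon$ with the conjunction over all successors $\loc'$ of $\eta(\loc',\vec{x})\le\eta(\loc,\vec{x})-\epsilon$; and for conditional labels C4 is split into one implication $\vec{x}\models I(\loc)\wedge\phi\Rightarrow \eta(\loc',\vec{x})\le\eta(\loc,\vec{x})-\epsilon$ per transition $(\loc,\phi,\loc')$ with $\loc'\ne\bot$, which faithfully captures the branch selected by $\TERM{\loc}$.

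The main obstacle is the assignment case, where $\mathrm{pre}_\eta(\loc,\vec{x})=\expv_\rvars(g,\vec{x})$ with $g(\vec{x},\vec{r})=\eta(\loc',f(\vec{x},\vec{r}))$. Since $f$ is a polynomial update and $\eta(\loc',\cdot)$ has degree $\le d$, the function $g$ is a polynomial in $(\vec{x},\vec{r})$; writing it as $\sum_{\vec{k}} p_{\vec{k}}(\vec{x})\cdot\prod_i \vec{r}[i]^{k_i}$ and pushing the expectation inside, each monomial in $\vec{r}$ is replaced by the constant $\prod_i\expv(\hat{\vec{r}}[i]^{k_i})$, which factors by independence into products of moments of the individual distributions. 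Because every sampling variable is bounded, all these moments up to order $d$ are finite constants, so $\expv_\rvars(g,\vec{x})$ is again a polynomial in $\vec{x}$ whose coefficients are linear in $\vec{c}$. (For effectiveness I assume these moments are given as rational or algebraic constants in the input, so that they are legitimate parameters of the real-closed-field sentence.) With every $\mathrm{pre}_\eta$ so expressed, $\varphi$ is a Boolean combination of the universally quantified polynomial conditions above, and the overall sentence lies in the $\exists\forall$-fragment of the theory of real closed fields; Tarski's theorem then decides its truth, proving decidability.

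Finally, the difference-bounded variant is handled by the same reduction: I would introduce two further existential unknowns $a,b$ for the interval endpoints and add, for each transition, the corresponding conditions of Definition~\ref{def:dbprsm}. Each is again ``for all $\vec{x}$ (and, in the assignment case, for all $\vec{r}$ with $\vec{r}[i]\in\support{r_i}$) a polynomial double inequality $a\le\eta(\loc',\cdot)-\eta(\loc,\cdot)\le b$ holds,'' i.e.\ universally quantified polynomial inequalities over the semialgebraic sets $\SAT{I(\loc)}$, $\SAT{I(\loc)\wedge\phi}$, and the bounded support intervals. Conjoining these to $\varphi$ keeps the sentence in the $\exists\forall$-fragment, and Tarski's procedure again yields decidability.
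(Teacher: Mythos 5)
Your proposal is correct and follows essentially the same route as the paper: fix a finite coefficient template for $\eta$, encode conditions C1--C4 (and, for the difference-bounded variant, the conditions of Definition~\ref{def:dbprsm} with extra existential unknowns $a,b$) as a sentence in the $\exists\forall$-fragment of the first-order theory of the reals, and invoke Tarski's decidability result for real closed fields. The paper's own proof is much terser; your explicit treatment of the assignment case---expanding $\mathrm{pre}_\eta$ by replacing monomials in $\vec{r}$ with finite moments of the bounded sampling distributions, assumed given as algebraic constants---merely fills in details the paper declares ``straightforward.''
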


\section{The Synthesis Algorithm}\label{sect:synthesisalgorithm}
In this section, we present an efficient algorithmic approach for solving 
almost-sure/finite termination and concentration questions through synthesis of 
pRSMs. 
Instead of quantifier elimination (of Theorem~\ref{thm:prsmdecision}) we use 
Positivstellensatz (German for ``positive-locus-theorem'' which is related 
to polynomials over semialgebraic sets),
and the approach of this section is sound but not complete (in contrast to 
the computationally expensive but complete method of Theorem~\ref{thm:prsmdecision}).
Note that by Theorem~\ref{thm:supermartingale-correctness}, the existence of a pRSM 
implies both almost-sure and finite termination of a probabilistic program. 

\paragraph{The General Framework.} To synthesize a pRSM, the algorithm first sets up a polynomial template with unknown coefficients.
Next, the algorithm finds values for the unknown coefficients, $\epsilon,K$ (cf. Definition~\ref{def:prsm}) and $[a,b]$ (cf. Definition~\ref{def:dbprsm}) so that C2-C4 in Definition~\ref{def:prsm} and concentration conditions in Definition~\ref{def:dbprsm} are satisfied. 
Note that from Definition~\ref{def:preexpectation}, each $\mathrm{pre}_\eta(\loc,\cdot)$ is a polynomial over $\pvars$ whose coefficients are \emph{linear combinations} of unknown coefficients from the polynomial template. 
Instead of using quantifier elimination (cf. e.g.~\cite{DBLP:journals/fcsc/YangZZX10} or Theorem~\ref{thm:prsmdecision}), 
we utilize Positivstellensatz's~\cite{ScheidererSurvey}. 
We observe that each universally-quantified formula described in C2, C4 and Definition~\ref{def:dbprsm} can be decomposed (through disjunctive normal form of propositional polynomial predicate or transformation of $\max$ in Definition~\ref{def:preexpectation} into two conjunctive clauses) into a conjunction of formulae of the following pattern ($\dag$)
\[
\forall \vec{x}\in\mathbb{R}^{|\pvars|}. \left[\left(g_1(\vec{x})\ge 0\wedge\dots\wedge g_m(\vec{x})\ge 0\right)\rightarrow g(\vec{x})> 0\right]\qquad(\dag)
\]
where each $g_i$ is a polynomial with constant coefficients and $g$ is one 
with unknown coefficients from the polynomial template. 
In the pattern, we over-approximate any possible `$g_j(\vec{x})>0$'  by `$g_j(\vec{x})\ge 0$'. By Remark~\ref{rmk:prsmextra}, the difference between `$g(\vec{x})> 0$' and 
`$g(\vec{x})\ge 0$' does not matter. 

\begin{example}\label{ex:pattern}
Consider again the program in Example~\ref{ex:runningexample} with CFG in 
Example~\ref{ex:runningcfg}. 
Consider the invariant specified in Example~\ref{ex:prsm}. 
The instances of the pattern for termination of this program are listed as follows, where each instance is represented by a pair $(\Gamma,g)$ where $\Gamma$ and $g$ 
corresponds to $\{g_1,\dots,g_m\}$ and resp. $g$ described in ($\dag$). 
\begin{compactitem}
\item (C4, label $1$) $(\{x-1,10-x,x,11-x\}, \eta(1,x)-\eta(2,x)-\epsilon)$;
\item (C4, label $2$) $(\{x-1,10-x\}, \eta(2,x)-\eta(3,x)-\epsilon)$ and $(\{x-1,10-x\}, \eta(2,x)-\eta(4,x)-\epsilon)$;
\item (C4, label $3$) $(\{x-1,10-x\}, \eta(3,x)-\expv_R((y,r)\mapsto\eta(1,y+r), x)-\epsilon)$;
\item (C4, label $4$) $(\{x-1,10-x\}, \eta(4,x)-0.51\eta(5,x)-0.49\eta(6,x)-\epsilon)$;
\item (C4, label $5$) $(\{x-1,10-x\}, \eta(5,x)-\eta(1, x-1)-\epsilon)$;
\item (C4, label $6$) $(\{x-1,10-x\}, \eta(6,x)-\eta(1, x+1)-\epsilon)$;
\item (C2) $(\{x,11-x\}, \eta(1,x))$ and $(\{x-1,10-x\}, \eta(j,x))$ for $2\le j\le 6$.
\end{compactitem}
\end{example}
In the next part, we show that such pattern can be solved by Positivstellensatz's. 

\subsection{Positivstellensatz's}\label{sect:positivstellensatz}

We fix a linearly-ordered finite set $X$ of variables and a finite set $\Gamma=\{g_1,\dots,g_m\}\subseteq\POLYNS{X}$ of polynomials. 
Let $\SAT{\Gamma}$ be the set of all vectors $\vec{x}\in\Rset^{|X|}$ satisfying the propositional polynomial predicate $\bigwedge_{i=1}^m g_i\ge 0$. 
We first define pre-orderings and sums of squares as follows.

\begin{definition}[Sums of Squares]
Define $\Theta$ to be the set of \emph{sums-of-squares}, i.e, 
\[
\Theta:=\left\{\sum_{i=1}^k h^2_i \mid k\in\Nset\mbox{ and }h_{1},\dots,h_k\in\POLYNS{X}\right\}~.
\]
\end{definition}

\begin{definition}[Preordering] 
The \emph{preordering} generated by $\Gamma$ is defined by:
\[
\PO{\Gamma}:=\left\{\sum_{w\in\{0,1\}^m} h_w\cdot \prod_{i=1}^{m} g_i^{w_i}\mid  \forall w.\ h_w\in\Theta\right\}~. 
\]
\end{definition}

\begin{remark}\label{rmk:sumofsquares}
It is well-known that 
a real-coefficient polynomial $g$ of degree $2d$ is a sum of squares iff there exists a $k$-dimensional positive semi-definite real square matrix $Q$ such that $g=\vec{y}^\mathrm{T} Q\vec{y}$, where $k$ is the number of monomials of degree no greater than $d$ and $\vec{y}$ is the column vector of all such monomials (cf.~\cite[Corollary 7.2.9]{MatrixAnalysis}). 
This implies that the problem whether a given polynomial (with real coefficients) is a sum of squares can be solved by semi-definite programming~\cite{SemidefiniteProgramming}. 
\end{remark}

Now we present the first Positivstellensatz, called Schm\"{u}dgen's Positivstellensatz. 

\begin{theorem}[Schm\"{u}dgen's Positivstellensatz~\cite{SchmudgenPositivstellensatz}]
\label{thm:schmuedgen}
Let $g\in\POLYNS{X}$. 
If the set $\SAT{\Gamma}$ is compact and $g(\vec{x})>0$ for all $\vec{x}\in\SAT{\Gamma}$, then $g\in\PO{\Gamma}$. 
\end{theorem}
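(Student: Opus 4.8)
The plan is to deduce this from two classical tools of real algebraic geometry: the abstract (Krivine--Stengle) Positivstellensatz, which produces \emph{multiplicative} positivity certificates, and the Kadison--Dubois (Jacobi) representation theorem for archimedean preorderings. The overall strategy is to first show that compactness of $\SAT{\Gamma}$ forces the preordering $\PO{\Gamma}$ to be \emph{archimedean}, and then to invoke the representation theorem to upgrade the strict positivity of $g$ on $\SAT{\Gamma}$ into the membership $g\in\PO{\Gamma}$.

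For the second, ``clean'' step I would recall the Kadison--Dubois theorem in the form: if $T$ is an archimedean preordering of a commutative $\Rset$-algebra $A$ (archimedean meaning that for every $a\in A$ there is $N\in\Nset$ with $N-a\in T$), and $a\in A$ is strictly positive at every $\Rset$-algebra homomorphism $\phi\colon A\to\Rset$ with $\phi(T)\subseteq[0,\infty)$, then $a\in T$. Applying this with $A=\POLYNS{X}$ and $T=\PO{\Gamma}$, the key observation is that an $\Rset$-algebra homomorphism $\POLYNS{X}\to\Rset$ is exactly evaluation at a point $\vec{x}\in\Rset^{|X|}$, and such a $\phi$ is nonnegative on $\PO{\Gamma}$ if and only if $g_i(\vec{x})\ge 0$ for every $i$ (since the generators $\prod_i g_i^{w_i}$ and the sums-of-squares coefficients in $\Theta$ then evaluate to nonnegative numbers, while conversely each $g_i\in\PO{\Gamma}$). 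Hence the character space of $\PO{\Gamma}$ is homeomorphic to $\SAT{\Gamma}$, which is compact by hypothesis; continuity then gives a uniform bound $g\ge\delta>0$ on $\SAT{\Gamma}$, and the representation theorem yields $g\in\PO{\Gamma}$.

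The remaining, and main, obstacle is to prove that compactness of $\SAT{\Gamma}$ makes $\PO{\Gamma}$ archimedean. It suffices to produce a single $N$ with $N-\sum_i \vec{x}[i]^2\in\PO{\Gamma}$: once this holds, each $N-\vec{x}[i]^2\in\PO{\Gamma}$ (adding back the remaining squares from $\Theta$), and a routine induction on degree then bounds \emph{every} polynomial (using the sum-of-squares identities $(\lambda\pm p)^2\ge 0$, i.e. $\pm 2\lambda p\le\lambda^2+p^2$, so that the set of ``bounded'' elements is a subring of $\POLYNS{X}$ containing $\Rset$ and all $\vec{x}[i]$). To obtain such an $N$, I would start from compactness: for large $R$ the polynomial $f:=R^2-\sum_i \vec{x}[i]^2$ is strictly positive on $\SAT{\Gamma}$, so the abstract Positivstellensatz furnishes $p,q\in\PO{\Gamma}$ with
\[
p\cdot f = 1 + q .
\]
The delicate part is to bootstrap this multiplicative ``certificate with denominators'' into the denominator-free additive bound $N-\sum_i \vec{x}[i]^2\in\PO{\Gamma}$; I expect this conversion to be the hardest step of the whole argument, as it is precisely the point where the passage from a rational representation to a genuine preordering membership must be forced by the compactness geometry.

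Finally, I would remark that there is an alternative route, namely Schm\"{u}dgen's original proof, which instead goes through the multidimensional moment problem (Haviland's theorem) together with the spectral theorem for unbounded self-adjoint operators; it reaches the same conclusion but replaces the algebraic bootstrapping above with functional-analytic machinery, so the difficulty is relocated rather than removed.
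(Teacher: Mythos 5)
Your proposal cannot be compared against an in-paper argument, because the paper offers none: Theorem~\ref{thm:schmuedgen} is imported as a black box from Schm\"{u}dgen's original article~\cite{SchmudgenPositivstellensatz}, and the paper only ever \emph{uses} it, to justify the certificate form~($\ddag$) in the synthesis algorithm. Judged on its own terms, your strategy is the standard modern proof (due to W\"{o}rmann, and presented in the books of Prestel--Delzell and of Marshall on positive polynomials): show that compactness of $\SAT{\Gamma}$ forces the preordering $\PO{\Gamma}$ to be archimedean, then conclude via the Kadison--Dubois/Jacobi representation theorem. This is indeed a genuine alternative to Schm\"{u}dgen's original route through Haviland's theorem and the spectral theory of unbounded self-adjoint operators, which you correctly mention. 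The portions you actually carry out are sound: the $\Rset$-algebra characters of $\POLYNS{X}$ are exactly the point evaluations, the characters nonnegative on $\PO{\Gamma}$ correspond exactly to the points of $\SAT{\Gamma}$ (since each $g_i\in\PO{\Gamma}$, and conversely each generator $h_w\cdot\prod_i g_i^{w_i}$ evaluates nonnegatively on $\SAT{\Gamma}$), and the reduction of archimedeanness to the single membership $N-\sum_i \vec{x}[i]^2\in\PO{\Gamma}$, via the subring of $\PO{\Gamma}$-bounded elements, is routine and correct.

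The genuine gap is the step you yourself flag and then defer: converting the Krivine--Stengle certificate $p\cdot\bigl(R^2-\sum_i \vec{x}[i]^2\bigr)=1+q$ with $p,q\in\PO{\Gamma}$ into the denominator-free membership $N-\sum_i \vec{x}[i]^2\in\PO{\Gamma}$. That conversion is W\"{o}rmann's lemma, and it is precisely the mathematical content that separates Schm\"{u}dgen's theorem from the classical Positivstellensatz; everything else in your outline is, by comparison, bookkeeping. Your instinct that this is ``the hardest step'' is exactly right, but a proof that stops there is a roadmap rather than a proof: the strict-positivity form of Krivine--Stengle always yields certificates with the multiplier $p$, and the whole difficulty of the theorem is that this multiplier cannot in general be removed --- it is only the interplay between the certificate and the special shape of $f=R^2-\sum_i\vec{x}[i]^2$ that allows the removal, through a nontrivial algebraic bootstrapping (an induction in which products of the certificate identity with carefully chosen squares absorb the unbounded terms). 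The lemma is true, so your approach does not fail and can be completed by inserting the argument from the cited textbooks; but as written, the decisive step of the theorem is asserted rather than proved.
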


From Schm\"{u}dgen's Positivstellensatz, any polynomial $g$ which is positive on $\SAT{\Gamma}$ can be represented by 
\[
(\ddag)\qquad g=\sum_{w\in\{0,1\}^m}h_w\cdot g_w~,
\]
where $g_w:=\prod_{i=1}^m g_i^{w_i}$ and $h_w\in\Theta$ for each $w\in\{0,1\}^m$. 
To apply Schm\"{u}dgen's Positivstellensatz, the degrees of those $h_w$'s are restricted to be no greater than a fixed natural number.
Then from Remark~\ref{rmk:sumofsquares} and by equating the coefficients of the same monomials between the two polynomials,
Eq.~($\ddag$) results in a system of linear equalities that involves variables for synthesis of a pRSM and variables (grouped as $2^m$ square matrices) under semi-definite constraints. 

\begin{example}\label{ex:schmuegen}
Assume that $X=\{x\}$ and $\Gamma=\{1-x,1+x\}$. Choose the maximal degree for sums of squares to be $2$. 
Then from Remark~\ref{rmk:sumofsquares}, the form of Eq.~($\ddag$) can be written as: 
\[
g=\sum_{i=1}^4 \left[\begin{pmatrix} 1 & x \end{pmatrix}\cdot \begin{pmatrix} a_{i,1,1} & a_{i,1,2} \\ a_{i,2,1} & a_{i,2,2} \end{pmatrix}\cdot\begin{pmatrix} 1 \\ x\end{pmatrix}\right]\cdot u_i 
\]
where $u_1=1$, $u_2=1-x$, $u_3=1+x$, $u_4=1-x^2$ and each matrix $(a_{i,j,k})_{2\times 2}$ ($1\le i\le 4$) is a matrix of variables subject to be positive semi-definite. 
\end{example}

Theorem~\ref{thm:schmuedgen} can be further refined by a weaker version of Putinar's Positivstellensatz. 

\begin{theorem}[Putinar's Positivstellensatz~\cite{PutinarPositivstellensatz}]
\label{thm:putinar}
Let $g\in\POLYNS{X}$. 
If (i) there exists some $g_i\in\Gamma$ such that the set $\{\vec{x}\in\Rset^{|X|}\mid g_i(\vec{x})\ge 0\}$ is compact and (ii) $g(\vec{x})>0$ for all $\vec{x}\in \SAT{\Gamma}$, then 
\[
(\S)\qquad g=h_0+\sum_{i=1}^m h_i\cdot g_i 
\]
for some sums of squares $h_0,\dots,h_m\in\Theta$. 
\end{theorem}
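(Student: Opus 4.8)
The plan is to obtain $(\S)$ from the abstract theory of \emph{Archimedean quadratic modules}, and \emph{not} by specializing Schm\"udgen's Positivstellensatz (Theorem~\ref{thm:schmuedgen}): the latter only certifies membership in the preordering $\PO{\Gamma}$, which is closed under the products $\prod_i g_i^{w_i}$, whereas $(\S)$ demands the stronger membership in the \emph{quadratic module}
\[
M:=\left\{h_0+\textstyle\sum_{i=1}^m h_i\cdot g_i\mid h_0,\dots,h_m\in\Theta\right\}\subseteq\POLYNS{X},
\]
in which no products occur. The extra algebraic input that buys this improvement is hypothesis~(i). First I would record that $M$ is a convex cone with $1\in M$, closed under multiplication by sums of squares, and that its associated \emph{character set} $K_M:=\{\alpha\colon\POLYNS{X}\to\Rset\mid \alpha\text{ is an }\Rset\text{-algebra homomorphism},\ \alpha(M)\subseteq[0,\infty)\}$ is exactly the set of point-evaluations $\mathrm{ev}_{\vec{x}}$ with $\vec{x}\in\SAT{\Gamma}$ (one direction is that $g_i\in M$ forces $\alpha(g_i)=g_i(\vec{x})\ge 0$; the converse is immediate from the definition of $M$). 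Under this identification hypothesis~(ii) reads precisely ``$g>0$ on $K_M$''.

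The crux is then a single clean statement, the \emph{Archimedean representation theorem}: if $M$ is Archimedean --- meaning there is a constant $C>0$ with $C-\sum_{j=1}^{|X|}\vec{x}[j]^2\in M$ --- then every $g$ that is strictly positive on $K_M$ already lies in $M$. Granting this, $(\S)$ follows at once, so two things remain. The first is to derive the Archimedean property from hypothesis~(i). Since $\{\vec{x}\mid g_i(\vec{x})\ge 0\}$ is compact it sits inside some ball $\{\sum_j\vec{x}[j]^2\le C\}$, and I would convert this inclusion into the explicit certificate $C-\sum_j\vec{x}[j]^2\in M$; in the paper's applications, where the guard region is a known compact set, one may instead simply include a ball constraint $C-\sum_j\vec{x}[j]^2\ge 0$ among the $g_i$, after which this step is trivial because the constraint is itself a generator of $M$.

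The second part, and the step I expect to be the main obstacle, is the representation theorem itself, which I would prove by contradiction. Assuming $g\notin M$, a Zorn's-lemma argument extends $M$ to a quadratic module $M'$ that is maximal with respect to $g\notin M'$; the Archimedean hypothesis forces $M'$ to behave like an ordering, so that the quotient $\POLYNS{X}/(M'\cap-M')$ is a totally ordered field whose order is Archimedean and therefore order-embeds into $\Rset$. This embedding is an $\Rset$-algebra homomorphism $\alpha\in K_M$ with $\alpha(g)\le 0$, contradicting $g>0$ on $K_M$. The delicate points are the convex-maximality/separation step producing $M'$ and the Kadison--Dubois-type fact that an Archimedean totally ordered field embeds into $\Rset$; it is exactly the control of the order supplied by the Archimedean property --- itself guaranteed by hypothesis~(i) --- that makes point-evaluations the \emph{only} characters, and hence carries the whole argument.
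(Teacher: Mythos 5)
The paper itself offers no proof of Theorem~\ref{thm:putinar}: it is quoted as a known result from Putinar's 1993 paper. So the only meaningful comparison is with the standard proof in the literature, and your outline is essentially that proof --- reduce to the Archimedean property of the quadratic module $M$, then invoke the Jacobi/Kadison--Dubois-type representation theorem, proved by a Zorn's-lemma extension and an order-embedding into $\Rset$. Your identification of the character set of $M$ with the point evaluations at points of $\SAT{\Gamma}$ is correct, your translation of hypothesis (ii) into strict positivity on that character set is the right reduction, and the delicate points of the representation theorem are correctly named (though deferred).

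The genuine gap is the step you dismiss as a conversion: ``I would convert this inclusion into the explicit certificate $C-\sum_j\vec{x}[j]^2\in M$.'' This is not routine; it is precisely Putinar's Lemma (the equivalence of his compactness criterion with the Archimedean property), and it is a statement of the same nature as the theorem you are proving --- knowing that $C-\sum_j\vec{x}[j]^2$ is strictly positive on $\{\vec{x}\mid g_i(\vec{x})\ge 0\}$ does not by itself yield a sums-of-squares certificate; producing certificates from positivity is the whole difficulty. The standard way to close this step is to apply Schm\"udgen's Positivstellensatz (Theorem~\ref{thm:schmuedgen}) to the singleton $\Gamma'=\{g_i\}$, whose satisfaction set is compact by hypothesis (i): for a single generator the preordering $\PO{\Gamma'}$ is exactly $\{h_0+h_1\cdot g_i\mid h_0,h_1\in\Theta\}$, which coincides with the quadratic module generated by $g_i$ and hence sits inside $M$. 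This does not contradict your (correct) opening observation that $(\S)$ cannot be obtained by merely specializing $(\ddag)$ --- Schm\"udgen enters only as a lemma furnishing the Archimedean certificate, not as the source of the final representation --- but without it (or an argument of comparable depth) your Archimedeanness claim is unproven and the representation theorem has nothing to apply to. One smaller correction: the Archimedean property does not ``make point-evaluations the only characters''; every $\Rset$-algebra homomorphism $\POLYNS{X}\to\Rset$ is a point evaluation regardless. What Archimedeanness buys is that the totally ordered field produced by your Zorn's-lemma argument order-embeds into $\Rset$, so that the homomorphism you construct is a real character at all.
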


Likewise, Eq.~($\S$) results in a system of linear equalities that involves variables for synthesis of a pRSM and matrices of variables under semi-definite constraints,
provided that an upper bound on the degrees of sums of squares is enforced. 

\begin{example}\label{ex:putinar}
Assume that $X=\{x\}$ and $\Gamma=\{1-x^2, 0.5-x\}$. Choose the maximal degree for sums of squares to be $2$.  
Then the form of Eq.~(\S) can be written as:
\[
g=\sum_{i=1}^3 \left[\begin{pmatrix} 1 & x \end{pmatrix}\cdot \begin{pmatrix} a_{i,1,1} & a_{i,1,2} \\ a_{i,2,1} & a_{i,2,2} \end{pmatrix}\cdot\begin{pmatrix} 1 \\ x\end{pmatrix}\right]\cdot u_i 
\]
where $u_1=1$, $u_2=1-x^2$, $u_3=0.5-x$ and each matrix $(a_{i,j,k})_{2\times 2}$ ($1\le i\le 4$) is a matrix of variables subject to be positive semi-definite.
\end{example}

In the following, we introduce a Positivstellensatz entitled Handelman's Theorem when $\Gamma$ consists of only linear (degree one) polynomials. 
For Handelman's Theorem, we assume that $\Gamma$ consists of only linear (degree $1$) polynomials
and $\SAT{\Gamma}$ is non-empty. (Note that whether a system of linear inequalities has a solution is decidable in PTIME~\cite{DBLP:books/daglib/0090562}.)

\begin{definition}[Monoid]
The \emph{monoid} of $\Gamma$ is defined by:
\[
\mbox{\sl Monoid}(\Gamma):=\left\{\prod_{i=1}^k h_i \mid k\in\Nset_0\mbox{ and }h_1,\dots,h_k\in\Gamma\right\}~~.
\]
\end{definition}

\begin{theorem}[Handelman's Theorem~\cite{HandelmanTheorem}]
\label{thm:handelman}
Let $g\in\POLYNS{X}$ be a polynomial such that $g(\vec{x})>0$ for all $\vec{x}\in\SAT{\Gamma}$. 
If $\SAT{\Gamma}$ is compact, then 
\[
(\#)\qquad g=\sum_{i=1}^d a_i\cdot u_i 
\]
for some $d\in\Nset$, real numbers $a_1,\dots,a_d\ge 0$ and $u_1,\dots,u_d\in\mbox{\sl Monoid}(\Gamma)$. 
\end{theorem}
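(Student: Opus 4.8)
The plan is to recognize the right-hand side of $(\#)$ as membership in an algebraic cone, and to establish that membership through the representation theorem for archimedean semirings (the Kadison--Dubois theorem). Concretely, let
\[
T := \left\{ \sum_{i=1}^d a_i\, u_i \mid d\in\Nset_0,\ a_i\ge 0,\ u_i\in\mathrm{Monoid}(\Gamma)\right\}\subseteq\POLYNS{X}
\]
be the set of all polynomials admitting a representation of the shape $(\#)$. Since $\mathrm{Monoid}(\Gamma)$ is closed under products and contains the empty product $1$, the set $T$ is a \emph{preprime}: a subsemiring of $\POLYNS{X}$ containing $\Rset_{\ge 0}$ and closed under addition and multiplication. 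The assertion $(\#)$ is then precisely the statement $g\in T$, so it suffices to prove a positivity-implies-membership principle for $T$.

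The first main step is to show that $T$ is \emph{archimedean}, i.e. that for every $f\in\POLYNS{X}$ there is an $N\in\Nset$ with $N\pm f\in T$. This is where compactness and linearity enter, and where the argument connects to the Farkas-style reasoning used earlier for linear ranking supermartingales. Because $\SAT{\Gamma}$ is compact, each coordinate is bounded, so there are constants $M_j$ with $M_j - x_j\ge 0$ and $M_j + x_j\ge 0$ on $\SAT{\Gamma}$. These are \emph{degree-one} polynomials that are nonnegative on the nonempty polytope $\SAT{\Gamma}=\{\,g_1\ge 0,\dots,g_m\ge 0\,\}$, so the affine form of Farkas' Lemma writes each as a nonnegative combination $\lambda_0+\sum_i \lambda_i g_i$ with all $\lambda_i\ge 0$; such a combination lies in $T$. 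Hence $M_j\pm x_j\in T$ for every coordinate. I would then invoke the standard closure argument: the set $\{\,f : \exists N\in\Nset,\ N\pm f\in T\,\}$ is a subring of $\POLYNS{X}$ (closure under sums is immediate, and closure under products is a short computation from $N\pm f,\ N'\pm f'\in T$), and it contains $\Rset$ together with every $x_j$; therefore it equals all of $\POLYNS{X}$, which is exactly the archimedean property of $T$.

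The second main step is to apply the Kadison--Dubois archimedean Positivstellensatz to $T$. Every $\Rset$-algebra homomorphism $\varphi:\POLYNS{X}\to\Rset$ with $\varphi(T)\subseteq\Rset_{\ge 0}$ is an evaluation $f\mapsto f(\vec{x}_\varphi)$ at a point $\vec{x}_\varphi\in\Rset^{|X|}$, and $\varphi(T)\subseteq\Rset_{\ge 0}$ forces $g_i(\vec{x}_\varphi)\ge 0$ for all $i$, i.e. $\vec{x}_\varphi\in\SAT{\Gamma}$; conversely every point of $\SAT{\Gamma}$ arises this way. The theorem asserts that an archimedean $T$ contains every element that is strictly positive at all such characters. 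Since by hypothesis $g(\vec{x})>0$ for all $\vec{x}\in\SAT{\Gamma}$, we obtain $g\in T$, which is precisely the representation $(\#)$.

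I expect the archimedean step to be the main obstacle of the elementary part: pinning down the coordinate bounds is easy, but one must check that the Farkas representation yields genuinely nonnegative coefficients (which uses the nonemptiness of $\SAT{\Gamma}$ assumed just before the theorem) and that the subring/closure computation is carried out correctly inside the semiring $T$. The genuinely heavy input is the representation theorem itself, whose proof is analytic --- a Hahn--Banach separation on the compact character space of $T$, or equivalently a Stone--Weierstrass-type density argument --- and which I would cite rather than reprove. I would emphasize that the strictness $g>0$ is exactly what upgrades the conclusion from ``$g+\epsilon\in T$ for every $\epsilon>0$'' to ``$g\in T$'', so that no denominators are needed; this is precisely what makes Handelman's Theorem stronger than a denominator-bearing Krivine- or P\'olya-type representation.
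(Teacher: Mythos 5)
You should know at the outset that the paper contains no proof of this statement: Handelman's Theorem is imported as an external classical result, cited to Handelman's 1988 paper, and is used purely as a black box in the synthesis algorithm (to turn instances of pattern $(\dag)$ into the linear system $(\#)$). So the comparison here is between your argument and the literature, not an internal proof. On its merits, your proposal is correct, and it is the standard modern derivation: the set $T$ of nonnegative combinations of elements of $\mbox{\sl Monoid}(\Gamma)$ is a preprime; archimedeanity follows exactly as you argue, since compactness bounds each coordinate, the affine Farkas lemma expresses $M_j\pm x_j$ as $\lambda_0+\sum_i\lambda_i g_i$ with all $\lambda_i\ge 0$ (this is where the two standing hypotheses stated just before the theorem --- that $\Gamma$ consists of linear polynomials and that $\SAT{\Gamma}\neq\emptyset$ --- are both indispensable, and you correctly flag both), and the subring-closure computation propagates the bound to all of $\POLYNS{X}$; then the Kadison--Dubois/Becker--Schwartz representation theorem yields $g\in T$ from strict positivity of $g$ on $\SAT{\Gamma}$. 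Two points warrant care and you handle both: the representation theorem must be invoked in its preprime (or $T$-module) form, not the more commonly quoted quadratic-module form, since $T$ is not closed under multiplication by arbitrary squares; and the strictness $g>0$ cannot be weakened. Your route is genuinely different from Handelman's original proof, which goes through the structure theory of partially ordered rings rather than a Farkas-plus-representation argument; what your route buys is brevity and a transparent localization of where compactness, linearity, and nonemptiness enter, at the cost of invoking a nonconstructive separation-type theorem --- a cost that is immaterial to the paper, which needs the statement only as a soundness/semi-completeness guarantee for its LP-based synthesis (cf.\ Remark~\ref{rmk:semicompleteness}).
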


To apply Handelman's theorem, we consider a natural number which serves as a bound on the number of multiplicands allowed to form an element in $\mbox{\sl Monoid}(\Gamma)$;
then Eq.~($\#$) results in a system of linear equalities involving $a_1,\dots,a_d$.
Unlike previous Positivstellensatz's, the form of Handelman's theorem allows us to construct a system of linear equalities free from semi-definite constraints. 

\begin{example}\label{ex:handelman}
Assume that $X=\{x\}$ and $\Gamma=\{1-x,1+x\}$. 
Fix the maximal number of multiplicands in an element of $\mbox{\sl Monoid}(\Gamma)$ to be $2$. 
Then the form of Eq.~($\#$) can be rewritten as 
\[
g=\sum_{i=1}^6 a_i\cdot u_i
\]
where $u_1=1$, $u_2=1-x$, $u_3=1+x$, $u_4=1-x^2$, $u_5=1-2x+x^2$, $u_6=1+2x+x^2$ and each $a_i$ ($1\le i\le 6$) is subject to be a non-negative real number. 
\end{example}

\subsection{The Algorithm for pRSM Synthesis}

Based on the Positivstellensatz's introduced in the previous part, we present our algorithm for synthesis of pRSMs. 
Below, we fix an input probabilistic program $P$, an input polynomial invariant $\inv$ and an input initial configuration $(\loc_0,\initval)$ for $P$. 
Let $\mathcal{G}=(\locs,\bot,(\pvars,\rvars),\transitions)$ be the associated CFG of $P$.

\smallskip\noindent{\em Description of the Algorithm \PRSMSynth{}.} 
We present a succinct description of the key ideas. 
The description of the key steps of the algorithm is as follows. 
\begin{compactenum}
\item {\em Template $\eta$ for a pRSM.} The algorithm fix a natural number $d$ as the maximal degree for a pRSM, construct $\mathcal{M}_d$ as the set of all monomials over $X$ of degree no greater than $d$,
and set up a template $d$-pRSM $\eta$ such that $\eta(\loc,\cdot)$ is the polynomial $\sum_{h\in\mathcal{M}_d} a_{h,\loc}\cdot h$ where each $a_{h,\loc}$ is a (distinct) scalar variable (cf. C1).
\item {\em Bound for Sums of Squares and Monoid Multiplicands.} The algorithm fix a natural number $k$ as the maximal degree for a sum of squares (cf. Schm\"{u}dgen's and Putinar's Positivstellensatz) 
or as the maximal number of multiplicands in a monoid element (cf. Handelman's Theorem). 
\item {\em RSM-Difference and Terminating-Negativity.} 
From Remark~\ref{rmk:prsmextra}, the algorithm fixes $\epsilon$ to be $1$ (cf. condition C3) and $K$ to be $-1$ (cf. condition C4). 
\item {\em Computation of pre-expectation $\mathrm{pre}_\eta$ .} With $\epsilon,K$ fixed to be resp. $1,-1$ in the previous step, the algorithm computes $\mathrm{pre}_\eta$ by Definition~\ref{def:preexpectation}, whose all involved coefficients are linear combinations from $a_{h,\loc}$'s. 
\item {\em Pattern Extraction.} The algorithm extracts instances conforming to pattern ($\dag$) from C2, C4 and formulae presented in Definition~\ref{def:dbprsm}, and translates them into systems of linear equalities 
over variables among $a_{h,\loc}$'s, $\epsilon$, $K$, and extra matrices of variables assumed to be positive semi-definite (cf. 
Schm\"{u}dgen's and Putinar's Positivstellensatz) or scalar variables assumed to be non-negative (cf. Handelman's Theorem) through Eq.~($\ddag$), Eq.~($\S$) and Eq.~($\#$). 
\item {\em Solution via Semidefinite or Linear Programming.} The algorithm calls semi-definite programming (for Schm\"{u}dgen's and Putinar's Positivstellensatz) or linear programming (for Handelman's Theorem) in order to check the feasibility or to optimize $\UB(P)$ (cf. Theorem~\ref{thm:supermartingale-correctness} for
upper bound of $\Eval(P)$) over all variables among $a_{h,\loc}$'s and extra matrix/scalar variables from Eq.~($\ddag$), Eq.~($\S$) and Eq.~($\#$). 
Note that the feasibility implies the existence of a (difference-bounded) $d$-pRSM; 
the existence of a $d$-pRSM in turn implies finite termination, and the existence of a difference-bounded $d$-pRSM in turn implies a concentration bound through Theorem~\ref{thm:concentration}.
\end{compactenum}

The soundness of our algorithm is as follows, whose proof is in Appendix~\ref{app:thmsynthesissoundness}. 

\begin{theorem}[Soundness]\label{thm:synthesissoundness}
Any function $\eta$ synthesized through the algorithm \PRSMSynth{} is a valid pRSM.   
\end{theorem}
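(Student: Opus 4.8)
The plan is to reduce the claim to the \emph{easy} (converse) direction of each Positivstellensatz, since the genuinely nontrivial content of Theorems~\ref{thm:schmuedgen}--\ref{thm:handelman} (positivity on $\SAT{\Gamma}$ \emph{implies} a representation) is needed only for completeness, never for soundness. Concretely, I would show that a function $\eta$ returned by \PRSMSynth{} satisfies all four conditions C1--C4 of Definition~\ref{def:prsm}. Conditions C1 and C3 are immediate from the construction: Step~1 instantiates each $\eta(\loc,\cdot)$ as $\sum_{h\in\mathcal{M}_d}a_{h,\loc}\cdot h$, a polynomial of degree at most $d$, so C1 holds; and Step~3 fixes $\eta(\bot,\cdot)\equiv K=-1$, so C3 holds. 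It remains to certify C2 and C4 (and, when the difference-bounded mode of Definition~\ref{def:dbprsm} is requested, the two one-sided bounds there), all of which the algorithm has encoded as instances of pattern~$(\dag)$.

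First I would verify that the reduction of C2 and C4 to a conjunction of pattern~$(\dag)$ instances is sound. Each such condition is a universally quantified implication whose hypothesis is $I(\loc)$ (for C2) or $I(\loc)\wedge\TERM{\loc}$ (for C4), and whose conclusion is a single polynomial inequality. Putting the hypothesis in disjunctive normal form turns one implication into a finite conjunction, one conjunct per disjunct; splitting the demonic $\max$ in $\mathrm{pre}_\eta$ (Definition~\ref{def:preexpectation}) into one clause per successor does the same on the conclusion side. The only approximations are replacing each strict atom $g_j(\vec{x})>0$ in a hypothesis by $g_j(\vec{x})\ge 0$, which only \emph{enlarges} $\SAT{\Gamma}$ and hence strengthens the statement actually proved, and replacing the strict conclusion $g(\vec{x})>0$ by $g(\vec{x})\ge 0$, which is immaterial by Remark~\ref{rmk:prsmextra}. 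Thus each verified instance of $(\dag)$ entails the corresponding piece of C2/C4, and their conjunction entails C2/C4 in full.

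Next I would establish the converse direction of the three Positivstellensatz's, which is what turns a feasible solution of the linear/semidefinite program into a genuine proof of $(\dag)$. Suppose the program returns, for a given instance with datum $(\Gamma,g)$, an identity of the form $(\ddag)$, $(\S)$ or $(\#)$. For $(\ddag)$ we have $g=\sum_{w}h_w\cdot g_w$ with each $h_w\in\Theta$ and $g_w=\prod_{i}g_i^{w_i}$; on $\SAT{\Gamma}$ every $g_i\ge 0$, so every $g_w\ge 0$, while every sum of squares $h_w\ge 0$ on all of $\Rset^{|X|}$, whence $g\ge 0$ on $\SAT{\Gamma}$. The same two observations give $g=h_0+\sum_i h_i g_i\ge 0$ on $\SAT{\Gamma}$ for $(\S)$, and $g=\sum_i a_i u_i\ge 0$ on $\SAT{\Gamma}$ for $(\#)$, since each $a_i\ge 0$ and each monoid element $u_i$ is a product of the $g_j$'s, hence nonnegative on $\SAT{\Gamma}$. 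Finally I would note that matching coefficients of equal monomials makes the polynomial identity $(\ddag)$/$(\S)$/$(\#)$ \emph{equivalent} to the linear-equality system the algorithm actually solves (Step~5)---the sum-of-squares terms being linear in the entries of the associated positive semi-definite matrices---so a feasible solution really does supply such an identity. Assembling the certified instances recovers C2 and C4, so $\eta$ is a valid pRSM.

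The main obstacle is not analytic but combinatorial: getting the reduction in the second paragraph exactly right. The delicate points are the handling of the demonic $\max$ (which must be expanded into one conjunct per demonic successor, not over-approximated), the DNF expansion of invariants that may contain disjunctions, and the bookkeeping that each strict-to-nonstrict relaxation goes in the sound direction. Once that reduction is pinned down, the remaining argument is the few-line converse of each Positivstellensatz above, which is elementary.
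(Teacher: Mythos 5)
Your proposal is correct and follows essentially the same route as the paper's own (much terser) proof: the paper likewise reduces soundness to the ``easy'' converse direction of the representation identities, noting that Eq.~($\ddag$), Eq.~($\S$), or Eq.~($\#$) forces $g\ge 0$ on $\SAT{\Gamma}$ regardless of compactness, with strict-versus-nonstrict handled by Remark~\ref{rmk:prsmextra}. Your write-up simply makes explicit the details the paper leaves implicit (nonnegativity of sums of squares and of products of the $g_i$'s on $\SAT{\Gamma}$, the per-successor expansion of the demonic $\max$, and the sound direction of the strict-to-nonstrict relaxations), which is a faithful elaboration rather than a different argument.
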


\begin{remark}[Efficiency]\label{rmk:polycase}
It is well-known that for semi-definite programs with a positive real number $R$ to bound the Frobenius norm of any 
feasible solution, an approximate solution upto  precision $\epsilon$ can be computed in polynomial time in the size of 
the semi-definite program (with rational numbers encoded in binary), $\log R$ and $\log\epsilon^{-1}$~\cite{SemidefiniteProgramming}. 
Thus, our sound approach presents an efficient method for analysis of many probabilistic programs. 
Moreover, when each propositional polynomial predicate in the probabilistic program involves only linear polynomials, 
then the sound form of Handelman's theorem can be applied, resulting in feasibility checking of systems of linear inequalities 
rather than semi-definite constraints. 
By polynomial-time algorithms for solving systems of linear inequalities~\cite{DBLP:books/daglib/0090562}, our approach is 
polynomial time (and thus efficient) over such programs.
\end{remark}

\begin{remark}[Semi-Completeness]\label{rmk:semicompleteness}
Consider probabilistic programs of the following form:
$\textbf{while}~\phi~\textbf{do}~\textbf{if $\star$ then } P_1~\textbf{else } P_2~\textbf{od}$~,
where $P_1,P_2$ are single assignments, $\SAT{\phi}$ is compact, and invariants which assign to each label a propositional polynomial predicate is in DNF form that involves no strict inequality (i.e. no `$<$' or `$>$'). Upon such inputs, our approach is \emph{semi-complete} in the sense that by raising the upper bounds for the degree of a sum of squares and the number of multiplicands in a monoid element,  
the algorithm \PRSMSynth{} will eventually find a pRSM if it exists. 
This is because Theorem~\ref{thm:schmuedgen} to~\ref{thm:handelman} are ``semi-complete'' when $\SAT{\Gamma}$ is compact, as the terminal label can be separately handled by $\TERM{\cdot}$ so that only compact $\Gamma$'s for Positivstellensatz's may be formed, and the difference between strict and non-strict inequalities does not matter (cf. Remark~\ref{rmk:prsmextra}).  
\end{remark}

\begin{remark}[Comparision with our previous result~\cite{DBLP:conf/popl/ChatterjeeFNH16}]
Our approach using Handelman's theorem is a strict generalization of the LRSM (linear ranking supermartingale) 
approach of~\cite{DBLP:conf/popl/ChatterjeeFNH16} that uses Farkas' lemma.
For example, our approach using Handelman's Theorem applied to affine programs can 
handle Example~\ref{ex:runningexample}, where no LRSM exists (Remark~\ref{rmk:linearnotsuffice}).
\end{remark}

\begin{remark}[New techniques for nonprobabilistic programs]
To the best of our knowledge, Schm\"{u}dgen's Positivstellensatz
and Handelman's Theorem have not been used for nonprobabilistic programs,
and thus our approach presents new analysis methods even for 
nonprobabilistic programs (though our approach is for the more general class
of nondeterministic probabilistic programs).
\end{remark}

\begin{remark}[Key Insights]
The key insights of this paper are (i) the need for pRSMs (cf. Remark~\ref{rmk:linearnotsuffice}), 
(ii) the adaptation of conditional expectation with pRSMs, (iii) the connection between synthesis of pRSMs and Positivstellensatz's and (iv) the adoption of semidefinite and linear programming to synthesize pRSMs. 
\end{remark}

\section{Experimental Results}\label{sect:experimental}
In this section, we present experimental results for our algorithm
through the semi-definite programming tool SOSTOOLS~\cite{sostools} 
(that uses SeDuMi~\cite{sedumi}) and 
the linear programming tool CPLEX~\cite{cplex}.
Due to space constraints, the detailed description of the input probabilistic 
programs are in Appendix~\ref{app:experiments}. 

\smallskip\noindent{\em Experimental examples and setup.}
We consider six classical examples of probabilistic programs that exhibit distinct types non-linear 
behaviours.
Our examples are, namely, \emph{Logistic Map} 
adopted in~\cite{DBLP:conf/vmcai/Cousot05} (Example~\ref{ex:logisticmap} in Appendix~\ref{app:experiments}) 
which was previously handled by Lagrangian relaxation and semi-definite programming 
whereas our approach is polynomial time using linear programming,
\emph{Decay} that models a sequence of points converging stochastically to the origin 
(Example~\ref{ex:decay} in Appendix~\ref{app:experiments}), 
\emph{Random Walk} that models a random walk within a bounded region defined through non-linear curves (Example~\ref{ex:randomwalk} 
in Appendix~\ref{app:experiments}), 
\emph{Gambler's Ruin} which is our running example (Example~\ref{ex:runningexample}), 
\emph{Gambler's Ruin Variant} (Example~\ref{ex:gamblersruin2} in Appendix~\ref{app:experiments}) 
which is a variant of (Example~\ref{ex:gamblersruin}), and 
\emph{Nested Loop} (Example~\ref{ex:nestedloop} in Appendix~\ref{app:experiments}) which is a nested loop 
with stochastic increments. 
Except for \emph{Gambler's Ruin Variant} and \emph{Nested Loop}, our approach is semi-complete for all other 
examples (cf. Remark~\ref{rmk:semicompleteness}). 
In all the examples the invariants are straightforward and was manually integrated with the input. 
Since SOSTOOLS only produces numerical results, we modify C2 to ``$\eta(\loc,\vec{x})\ge 0$'' to ``$\eta(\loc,\vec{x})\ge 1$'' 
for Putinar's or Schm\"{u}dgen's Positivstellensatz and check whether the maximal numerical error of all equalities added to SOSTOOLS 
is sufficiently small over a bounded region. 
In our examples, the bounded region is $\{(x,y)\mid x^2+y^2\le 2\}$ (cf.~Example~\ref{ex:randomwalk} and Example~\ref{ex:decay}) 
and the maximal numerical error should not exceed $1$. 
Note that $1$ is also our fixed $\epsilon$ in C4, and by Remark~\ref{rmk:prsmextra}, the modification on C2 is not 
restrictive.
Instead, one may also pursue Sylvester's Criterion (cf.~\cite[Theorem 7.2.5]{MatrixAnalysis}) to check membership of sums of squares through checking whether a square matrix is positive semi-definite or not.
More elegant approaches for numerical problems is a subject of future work.

\smallskip\noindent{\em Experimental results.}
In Table~\ref{tab:experiments}, we present the experimental results, where 
`Method' means that whether we use either Handelman's Theorem, Putinar's Positivstellensatz or Schm\"{u}dgen's Positivstellensatz to synthesize pRSMs, 
`SOSTOOLS/CPLEX' means the running time for CPLEX/SOSTOOLS in seconds, 'error' is the maximal numerical error of equality constraints added into SOSTOOLS 
(when instantiated with the solutions), and $\eta(\loc_0,\cdot)$ is the polynomial for the initial label in the synthesized pRSM. 
The synthesized pRSMs (in the last column) refer to the variables of the program.
All numbers except errors are rounded to $10^{-4}$.
For all the examples, our translation to the optimization problems are linear.
We report the running times of the optimization tools and synthesized pRSMs.
The experimental results were obtained on Intel Core i7-2600 machine with 3.4 GHz 
with 16GB RAM.

\begin{table}
\begin{center}
{\footnotesize
\begin{tabular}{|c|c|c|c|c|}
\hline
  Example &  Method &  SOSTOOLS & error & $\eta(\loc_0,\cdot)$   \\
\hline
Decay & Putinar & $0.1248$s  & $\le 10^{-9}$ & $ 5282.3435x^2 + 5282.3435y^2 + 1$\\
\hline
Random Walk  & Schm\"{u}dgen & $0.7176$s & $\le 10^{-7}$ & $-300x^2 - 300y^2 + 601$\\
\hline
\hline
  Example &  Method &  CPLEX & - &  $\eta(\loc_0,\cdot)$   \\
\hline
Gambler's Ruin & Handelman & $\le 10^{-2}$s  & - & $33x-3x^2$ \\
\hline
Gambler's Ruin V. & Handelman & $\le 10^{-2}$s  & - & $-21+100x-70y-100x^2+100xy$ \\
\hline
Logistic Map  & Handelman & $\le 10^{-2}$s & - & $1000500.7496x$ \\
\hline
Nested Loop & Handelman & $\le 2\cdot10^{-2}$s & - &  $48 + 160n  + (m-x)(800n+240) $ \\
\hline
\hline
\end{tabular}
}
\end{center}
\caption{Experimental Results}
\label{tab:experiments}
\end{table}

For all the examples we consider except Logistic Map, their almost-sure termination cannot be answered by previous approaches. 
For the Logistic-Map example, our reduction is to linear programming whereas existing approaches~~\cite{DBLP:conf/vmcai/Cousot05,DBLP:journals/jossac/ShenWYZ13} reduce to semidefinite programming.

\section{Conclusion and Future Work}\label{sect:conclusion}
In this paper, we extended linear ranking supermartingale (LRSM) for probabilistic 
programs proposed  in~\cite{SriramCAV,DBLP:conf/popl/ChatterjeeFNH16} to polynomial ranking supermartingales (pRSM) 
for nondeterministic probabilistic programs. 
We developed the notion of (difference bounded) pRSM and proved that it is sound for almost-sure and finite termination,
as well as for concentration bound (Theorem~\ref{thm:supermartingale-correctness} and Theorem~\ref{thm:concentration}).
Then we developed an efficient (sound but not complete) algorithm for synthesizing pRSMs 
through Positivstellensatz's (cf.~Sect.~\ref{sect:positivstellensatz}), 
proved its soundness (Theorem~\ref{thm:synthesissoundness}) and argued its 
semi-completeness (Remark~\ref{rmk:semicompleteness}) over an important class 
of programs. 
Finally, our experiments demonstrate the effectiveness of our synthesis approach over various classical probabilistic of programs, 
where LRSMs do not exist (cf. Example~\ref{ex:runningexample} and Remark~\ref{rmk:linearnotsuffice}).   
Directions of future work are to explore (a)~more elegant methods for numerical problems related
to semi-definite programming, and (b)~other forms of RSMs for more general class of 
probabilistic programs. 

\subsubsection*{Acknowledgement}
We thank anonymous referees for valuable comments. 
We also thank Hui Kong for his help on SOSTOOLS. 
The research was partly supported by Austrian Science Fund (FWF) NFN Grant No S11407-N23 (RiSE/SHiNE), ERC Start grant (279307: Graph Games), 
ERC Advanced Grant ( 267989: QUAREM), and Natural Science Foundation of China (NSFC) under
Grant No. 61532019.

\clearpage

\appendix

\section{Propositional Polynomial Predicates}\label{sect:predicate}

Formally, the set of propositional polynomial predicates over $X$ is defined as
the smallest set satisfying the following conditions:
\begin{compactenum}
\item each polynomial constraint over $X$ is a propositional polynomial predicate;
\item both $\TRUE$ and $\FALSE$ are propositional polynomial predicates;
\item if $\phi$ is a propositional polynomial predicate, then so is $\neg\phi$;
\item if $\phi,\psi$ are propositional polynomial predicates, then so are $\phi\wedge\psi$ and $\phi\vee\psi$.
\end{compactenum}
The satisfaction relation $\models$ between real vectors $\vec{x}$ and propositional polynomial predicates $\phi$ is defined by:
\begin{compactitem}
\item $\vec{x}\models\TRUE$ and $\vec{x}\not\models\FALSE$ for all vectors $\vec{x}$;
\item $\vec{x}\models g_1\Join g_2$ iff $g_1(\vec{x}) \Join g_2(\vec{x})$~;
\item $\vec{x}\models \neg\phi$ iff $\vec{x}\not\models\phi$~;
\item $\vec{x}\models \phi\wedge\psi$ iff $\vec{x}\models \phi$ and $\vec{x}\models\psi$~;
\item $\vec{x}\models \phi\vee\psi$ iff $\vec{x}\models \phi$ or $\vec{x}\models\psi$~.
\end{compactitem}

\section{Probabilistic Programs: Detailed Syntax}\label{app:syntax}

Let $\mathcal{X}$ and $\mathcal{R}$ be the disjoint countable collections of \emph{program} and \emph{sampling} variables, respectively.
We assume that each sampling variable $r$ be associated with a one-dimensional cumulative distribution function $\Upsilon_r$ and 
a non-empty bounded interval $\support{r}$ in $\Rset$ such that $\Upsilon_r(\sup\support{r})=1$ and $\Upsilon_r(\inf\support{r})=0$, and 
the sampled values for $r$  fall in $\support{r}$ with probability~1 (this is the rigorous condition of the boundedness of the 
sampling variables). 

\paragraph*{The Syntax.} The syntax of probabilistic programs is given by the grammar in Figure~\ref{fig:syntax}. 
The expressions $\langle \mathit{pvar}\rangle,\langle\mathit{rvar}\rangle$ and $\langle \mathit{pvarlist}\rangle$ range over $\mathcal{X},\mathcal{R}$ and finite sequences of program variables, respectively.
The expressions $\langle \mathit{expr} \rangle, \langle\mathit{rexpr}\rangle$ and $\langle\mathit{rexprlist}\rangle$ may be evaluated to any polynomial with variables in $\mathcal{X}$, any polynomial with variables in $\mathcal{X}\cup\mathcal{R}$ and any finite list of polynomials with variables in $\mathcal{X}\cup\mathcal{R}$, respectively. 
The assignment statement $\langle\mathit{pvarlist}\rangle \,\text{'$:=$'}\, \langle\mathit{rexprlist}\rangle$ specifies simultaneous update of program variables in $\langle\mathit{pvarlist}\rangle$ by polynomial $\langle\mathit{rexprlist}\rangle$ in sequel; we thus assume that each instance of $\langle\mathit{pvarlist}\rangle$ will contain distinct program variables and the length of each instance of $\langle\mathit{pvarlist}\rangle$ will always be equal to the corresponding instance of $\langle\mathit{rexprlist}\rangle$. 
From the assignment statement one observes that sampling variables can only be used in the RHS of an assignment.
Sequential composition, if-branch and while-loop are indicated by semicolon, the keyword \textbf{if} and the keyword \textbf{while}, respectively.  
Moreover, $\langle\mathit{bexpr}\rangle$ may be evaluated to any propositional polynomial predicate. 

\paragraph*{The ``if-then-else'' Statement.} The guard $\langle\mathit{ndbexpr}\rangle$ of each if-then-else statement is either a keyword $\star$ representing demonic resolution of non-determinism, or a keyword 
$\mbox{\textbf{prob(}}p\mbox{\textbf{)}}$ ($p\in (0,1)$ being a number given in decimal representation) representing the probabilistic choice that the if-branch is executed with probability $p$ and the then-branch with probability $1-p$, or a propositional polynomial predicate, in which case the statement represents a standard deterministic conditional branching.

\begin{figure}
\centering
\begin{align*}
&\langle \mathit{stmt}\rangle ::= \,\langle\mathit{pvarlist}\rangle \,\text{'$:=$'}\, \langle\mathit{rexprlist} \rangle \\
&\mid   \text{'\textbf{if}'} \, \langle\mathit{ndbexpr}\rangle\,\text{'\textbf{then}'} \, \langle \mathit{stmt}\rangle \, \text{'\textbf{else}'} \, \langle \mathit{stmt}\rangle \,\text{'\textbf{fi}'}
\\
&\mid  \text{'\textbf{while}'}\, \langle\mathit{bexpr}\rangle \, \text{'\textbf{do}'} \, \langle \mathit{stmt}\rangle \, \text{'\textbf{od}'}
\\ 
&\mid \langle\mathit{stmt}\rangle \, \text{';'} \, \langle \mathit{stmt}\rangle \mid \text{'\textbf{skip}'}
\\
\vspace{\baselineskip}
\\
&\langle\mathit{expr} \rangle ::= \langle \mathit{constant} \rangle \mid \langle\mathit{pvar}\rangle 
\\
& \mid \langle \mathit{expr} \rangle \,\text{'$*$'} \, \langle\mathit{expr}\rangle  
\\
&\mid \langle\mathit{expr} \rangle\, \text{'$+$'} \,\langle\mathit{expr} \rangle \mid \langle\mathit{expr} \rangle\, \text{'$-$'} \,\langle\mathit{expr} \rangle
\\
\vspace{\baselineskip}
\\
&\langle\mathit{rexpr} \rangle ::= \langle \mathit{constant} \rangle \mid \langle\mathit{pvar}\rangle \mid  \langle\mathit{rvar}\rangle
\\
& \mid \langle \mathit{rexpr} \rangle \,\text{'$*$'} \, \langle\mathit{rexpr}\rangle  
\\
& \mid \langle\mathit{rexpr} \rangle\, \text{'$+$'} \,\langle\mathit{rexpr} \rangle \mid \langle\mathit{rexpr} \rangle\, \text{'$-$'} \,\langle\mathit{rexpr} \rangle
\\
\vspace{\baselineskip}
\\
&\langle\mathit{pvarlist}\rangle ::=\langle\mathit{pvar}\rangle\,\text{'$,$'}\,\langle\mathit{pvarlist}\rangle\mid \langle\mathit{pvar}\rangle
\\
&\langle\mathit{rexprlist}\rangle ::=\langle\mathit{rexpr}\rangle\,\text{'$,$'}\,\langle\mathit{rexprlist}\rangle\mid \langle\mathit{rexpr}\rangle
\\
\vspace{\baselineskip}
\\
&\langle\mathit{literal} \rangle ::= \langle\mathit{expr} \rangle\, \text{'$\leq$'} \,\langle\mathit{expr} \rangle \mid \langle\mathit{expr} \rangle\, \text{'$\geq$'} \,\langle\mathit{expr} \rangle
\\
&\langle\mathit{polyexpr} \rangle ::=  \langle\mathit{literal} \rangle\mid \langle\mathit{literal} \rangle\, \text{'\textbf{and}'} \,\langle\mathit{polyexpr} \rangle
\\
&\langle \mathit{bexpr}\rangle ::=  \langle \mathit{polyexpr} \rangle \mid \langle \mathit{polyexpr} \rangle \, \text{'\textbf{or}'} \, \langle\mathit{bexpr}\rangle
\\
&\langle\mathit{ndbexpr} \rangle ::=  \text{'$\star$'}\mid \text{'\textbf{prob($p$)}'} \mid \langle\mathit{bexpr} \rangle
\end{align*}
\caption{Syntax of Probabilistic Programs}
\label{fig:syntax}
\end{figure}

\section{Transformation from Programs to CFGs}\label{app:cfg}

Below we fix a set $\pvars$ of program variables and a set $\rvars$ of sampling variables.
We also fix two linear orders on $\pvars$ and $\rvars$ under which
$\pvars = \{x_1,\dots,x_{|\pvars|}\}$ and $\rvars = \{r_1,\dots,r_{|\pvars|}\}$.

We recall that a \emph{valuation} of program variables is a vector $\vec{x}\in\Rset^{|\pvars|}$ interpreted in the way that the actual value held by a program variable $x_i$ ($1\le i\le |\pvars|$) is $\vec{x}[i]$;
similarly, a \emph{valuation} of sampling variables is a vector $\vec{r}\in\Rset^{|\rvars|}$ such that the sampled value held by $r_i$ is $\vec{r}[i]$.
Every update function $f$ in a CFG can then be viewed as a tuple $(f_1,\dots,f_{|\pvars|})$, where each $f_i$ is of type $\Rset^{|\pvars|}\times\Rset^{|\rvars|}\rightarrow\Rset$.
We use the following succinct notation for special update functions: by $\id$ we denote the function which does not change the program variables at all, i.e. for every $1\leq i \leq |\pvars|$ we have $f_i(\vec{x},\vec{r})= \vec{x}[i]$. For any $k$ functions $g_1,\dots,g_k:\Rset^{|\pvars|}\times\Rset^{|\rvars|}\rightarrow\Rset$ and any sequence $n_1,\dots,n_k$ of
$k$ distinct numbers in $\{1,\dots,|\pvars|\}$, we denote by $\assgn{\{x_{n_j}\}_{1}^k}{\{g_j\}_{1}^k}$ the update function $f$ such that $f_{n_j}(\vec{x},\vec{r})=g_j(\vec{x},\vec{r})$ for $1\le j\le k$ and $f_i(\vec{x},\vec{r})=\vec{x}[i]$ whenever $i\not\in\{n_1,\dots,n_k\}$.

\paragraph*{From Probabilistic Programs to CFGs.}
To every probabilistic program $P$ with programs variables from $\pvars$ and sampling variables from $\rvars$,
we construct a CFG $\mathcal{G}_P$ inductively on the structure of $P$.
The CFG $\mathcal{G}_P$ has $\pvars$ and resp. $\rvars$ as its set of program and resp. sampling variables.
For each program $P$, the CFG $\mathcal{G}_P$ involves two distinguished labels, namely $\loc^{\lin}_{P}$ and $\loc^{\lout}_{P}$, that intuitively represent the label assigned to the first statement to be executed in $P$ and the terminal label of $P$, respectively.
The inductive construction is as follows.
\begin{compactenum}
\item {\em Asssignments and Skips.}
For $P= x_{n_1},\dots,x_{n_k}:= E_1,\dots,E_k$ or resp. $P = \textbf{skip}$, the CFG $\mathcal{G}_P$ consists of a new assignment labels $\loc^{\lin}_{P}$ and a new terminal label $\loc^{\lout}_{P}$,
and a transition $(\loc^{\lin}_{P},\assgn{\{x_{n_j}\}_1^k}{\{E_j\}_1^k},\loc^{\lout}_P)$ or resp. $(\loc^{\lin}_{P},\id,\loc^{\lout}_P)$, where
we treat each $E_j$ as a function through direct evaluation of variables.

\item {\em Sequential Statements.}
For $P = Q_1;Q_2$, we take the disjoint union of the CFGs $\mathcal{G}_{Q_1}$, $\mathcal{G}_{Q_2}$, while redefining $\loc^{\lout}_{Q_1}$ to be $\loc^{\lin}_{Q_2}$ and putting $\loc^{\lin}_{P}=\loc^{\lin}_{Q_1}$ and $\loc^{\lout}_{P}=\loc^{\lout}_{Q_2}$.

\item {\em While Statements.}
For $P = \textbf{while $\phi$ do }Q \textbf{ od}$, we add a new terminal label $\loc^{\lout}_{P}$, change $\loc^{\lout}_{Q}$ to a branching label, add transitions $(\loc^{\lout}_{Q},\phi,\loc^{\lin}_{Q})$ and $(\loc^{\lout}_{Q},\neg\phi,\loc^{\lout}_{P})$, and define $\loc^{\lin}_{P}:=\loc^{\lout}_{Q}$.

\item {\em If Statements.}
For $P = \textbf{if $B$ then }Q_1 \textbf{ else } Q_2 \textbf{ fi}$,
we consider different cases on $B$: if $B$ is some $\textbf{prob}(p)$, then we add a new probabilistic label $\loc^{\lin}_{P}$ together with two transitions
$(\loc^{\lin}_{P},p,\loc^{\lin}_{Q_1})$ and $(\loc^{\lin}_{P},1-p,\loc^{\lin}_{Q_2})$;
if $B$ is some propositional polynomial predicate $\phi$ then we add a new branching label
$\loc^{\lin}_{P}$ together with transitions
$(\loc^{\lin}_{P},\phi,\loc^{\lin}_{Q_1})$ and $(\loc^{\lin}_{P},\neg\phi,\loc^{\lin}_{Q_2})$;
otherwise, $B=`\star'$ and we add a new demonic label $\loc^{\lin}_{P}$ together with  transitions $(\loc^{\lin}_{P},\star,\loc^{\lin}_{Q_1})$ and $(\loc^{\lin}_{P},\star,\loc^{\lin}_{Q_2})$.
In any of the cases above, we also add a new terminal label $\loc^{\lout}_{P}$ and identify both $\loc^{\lout}_{Q_1}$ and $\loc^{\lout}_{Q_2}$ with $\loc^{\lout}_{P}$.
\end{compactenum}

\section{The Semantics: Detailed Description}\label{app:semanticsdetails}

The behaviour of a probabilistic program $P$ accompanied with its CFG $\mathcal{G}=(\locs,\bot,(\pvars,\rvars),\transitions)$ under a scheduler $\sigma$ is described as follows. 
The program starts in the initial configuration $(\loc_0,\initval)$. 
Then in each \emph{step} $i$ ($i\in\mathbb{N}_0$), given the current configuration $(\loc_{i},\vec{x}_{i})$, the next configuration $(\loc_{i+1},\vec{x}_{i+1})$ is determined by the following procedure:
\begin{compactenum}
\item
a valuation $\vec{r}_i$ of the sampling variables is sampled according to the joint distribution of the cumulative distributions $\{\Upsilon_r\}_{r\in R}$ and independent of all previously-traversed configurations (including $(\loc_i,\vec{x}_i)$), all previous samplings on $R$ and previous executions of probabilistic branches;
\item
if $\loc_i\in\locs_\mathrm{d}$ and $c_0,\dots,c_i$ is the finite path traversed so far
(i.e., $c_0=(\loc_0,\initval)$ and $c_i=(\loc_i,\vec{x}_i)$) with $\sigma(c_0,\dots,c_i)=(\loc_i, \star, \loc')$, then 
$(\loc_{i+1},\vec{x}_{i+1})$ is set to be $(\loc',\vec{x}_i)$;
\item 
if $\loc_i\in\locs_\mathrm{p}$ and $(\loc_i, p, \loc_1),(\loc_i, 1-p,\loc_2)$ are namely the two transitions in $\transitions$ with source label $\loc_i$, then with a Bernoulli experiment independent of all previous samplings, probabilistic branches and traversed configurations, $(\loc_{i+1},\vec{x}_{i+1})$ is set to be (i) $(\loc_1,\vec{x}_i)$ with probability $p$ and (ii) $(\loc_2,\vec{x}_i)$ with probability $1-p$; 
\item
if $\loc_i\in\locs_\mathrm{c}$ and $(\loc_i, \phi, \loc_1),(\loc_i, \neg\phi,\loc_2)$ are namely the two transitions in $\transitions$ with source label $\loc_i$, then 
$(\loc_{i+1},\vec{x}_{i+1})$ is set to be (i) $(\loc_1,\vec{x}_i)$ when $\vec{x}_i\models\phi$ and (ii) $(\loc_2,\vec{x}_i)$ when $\vec{x}_i\models\neg\phi$;
\item 
if $\loc_i\in\locs_\mathrm{a}$ and $(\loc_i, f, \loc')$ is the only transition in $\transitions$ with source location $\loc_i$, then $(\loc_{i+1},\vec{x}_{i+1})$ is set to be 
$(\loc',f(\vec{x}_{i},\vec{r}_i))$;
\item 
if $\loc_i=\bot$ then $(\loc_{i+1}, \vec{x}_{i+1})$ is set to be $(\loc_i,\vec{x}_{i})$.
\end{compactenum}

\section{Proof of Lemma ~\ref{lemma:condexp} and Theorem~\ref{thm:supermartingale-correctness}}\label{sect:prsmproof}

\textbf{Lemma~\ref{lemma:condexp}.}
Let $\eta: \locs_\bot\times\mathbb{R}^{|\pvars|}\rightarrow\mathbb{R}$ be a function such that
each $\eta(\loc,\cdot)$ (for all $\loc\in\locs_\bot$) is a polynomial function over $\pvars$, and $\sigma$ be any scheduler.
Let the stochastic process $\{X_n\}_{n\in\mathbb{N}_0}$ be defined by:
$X_{n}:=\eta(\theta_{n},\overline{\vec{x}}_{n})$.
Then for all $n\in\mathbb{N}_0$, we have $\expv^{\sigma}(X_{n+1}\mid\mathcal{H}_n)\le\mathrm{pre}_\eta(\theta_{n},\overline{\vec{x}}_{n})$.
\begin{proof}
For all $n\in\mathbb{N}_0$, from the syntax and semantics of probabilistic program we have
\[
X_{n+1}=\mathbf{1}_{\theta_{n}=\bot}\cdot X_n+Y_\mathrm{p}+Y_\mathrm{d}+Y_\mathrm{c}+Y_\mathrm{a}
\]
where the terms are described below.
\[
Y_\mathrm{p}:=\sum_{\loc\in\locs_\mathrm{p}}\left[\mathbf{1}_{\theta_n=\loc}\cdot \sum_{i\in\{0,1\}} \left(\mathbf{1}_{Z_\loc=i}\cdot\eta(\loc_{Z_\loc=i},\overline{\vec{x}}_{n})\right)\right]
\]
where each random variable $Z_\loc$ is the Bernoulli random variable for the decision of the probabilistic branch and $\loc_{Z_\loc=0},\loc_{Z_\loc=1}$ are the corresponding target labels from $\loc$ in $\transitions$.
(Note that all $Z_\loc$'s are independent of $\mathcal{H}_n$.)
In other words, $Y_\mathrm{p}$ describes the semantics of statements with probabilistic labels.
\[
Y_\mathrm{a}:=\sum_{\loc\in\locs_\mathrm{a}}\left[\mathbf{1}_{\theta_n=\loc}\cdot \eta(\loc',f_\loc(\overline{\vec{x}}_{n},\overline{\vec{r}}_n))\right]
\]
where $(\loc,f_\loc,\loc')$ is the only transition in $\transitions$ with source label $\loc$, describing the semantics of statements with assignment labels.
\[
Y_\mathrm{c}:=\sum_{\loc\in\locs_\mathrm{c}}\sum_{(\loc,\phi,\loc'')\in\transitions} \left[\mathbf{1}_{\theta_n=\loc\wedge \overline{\vec{x}}_{n}\models\phi}\cdot \eta(\loc'', \overline{\vec{x}}_{n})\right]
\]
which describes the semantics of statements with branching labels.
\[
Y_\mathrm{d}:=\sum_{\loc\in\locs_\mathrm{d}}\mathbf{1}_{\theta_n=\loc}\cdot\eta\left(
\mathrm{tgt}\left[\sigma\left(\left\{(\theta_k,\overline{\vec{x}}_{k})\right\}_{0\le k\le n}\right)\right],\overline{\vec{x}}_{n}\right)
\]
where
$\mathrm{tgt}\left[\sigma\left(\left\{(\theta_k,\overline{\vec{x}}_{k})\right\}_{0\le k\le n}\right)\right]$
is the target label of the transition $\sigma\left(\left\{(\theta_k,\overline{\vec{x}}_{k})\right\}_{0\le k\le n}\right)$,
describing the semantics of demonic labels.
Then from properties of conditional expectation~\cite[Page 88]{probabilitycambridge}, one obtains:
\[
\expv^{\sigma}(X_{n+1}\mid\mathcal{H}_n)=\mathbf{1}_{\theta_n=\bot}\cdot X_n+Y'_{\mathrm{p}}+Y'_\mathrm{a}+Y_\mathrm{c}+Y_\mathrm{d}
\]
(see below for details).
This can be seen as follows.
From the fact that $\mathbf{1}_{\theta_n=\bot}\cdot X_n$, $Y_\mathrm{d},Y_\mathrm{c}$ are measurable in $\mathcal{H}_n$,
we have $\expv^{\sigma}(\mathbf{1}_{\theta_n=\bot}\cdot X_n \mid  \mathcal{H}_n) = \mathbf{1}_{\theta_n=\bot}\cdot X_n$ and similarly for $Y_\mathrm{d},Y_\mathrm{c}$.
For $Y_\mathrm{p}$ and $Y_\mathrm{a}$ we need their conditional expectation as $Y'_{\mathrm{p}}$ and $Y'_{\mathrm{a}}$ defined below:
\[
Y'_\mathrm{p}:=\sum_{\loc\in \locs_\mathrm{p}}\left[\mathbf{1}_{\theta_n=\loc}\cdot\sum_{i\in \{0,1\}} \left(\probm^\sigma(Z_\loc=i)\cdot\eta(\loc_{Z_\loc=i},\overline{\vec{x}}_{n})\right)\right]
\]
and
\[
Y'_\mathrm{a}:= \sum_{\loc\in \locs_\mathrm{a}}\left[\mathbf{1}_{\theta_n=\loc}\cdot\expv_R\left(g_\loc,\overline{\vec{x}}_n\right)\right]
\]
where $g_\loc$ equals the function $(\vec{x},\vec{r})\mapsto\eta(\loc', f_\loc(\vec{x},\vec{r}))$.
Note that the fact that $\expv_R\left(g_\loc,\overline{\vec{x}}_n\right)$
is well-defined is because we consider polynomial functions (i.e., pRSMs).

Note that the case for $Y'_\mathrm{a}$ is derived from the fact that each $\eta(\loc',\cdot)$ is a polynomial over $X$ and $\overline{\vec{r}}_n$ is independent of $\mathcal{H}_n$.
Now by definition,
\[
\mathbf{1}_{\theta_n\not\in\locs_\mathrm{d}}\cdot\mathrm{pre}_\eta(\theta_n, \overline{\vec{x}}_n)=\mathbf{1}_{\theta_n=\bot}\cdot X_n+Y'_\mathrm{p}+Y'_\mathrm{a}+Y_\mathrm{c}
\]
and
\[
Y_\mathrm{d}\le \mathbf{1}_{\theta_n\in\locs_\mathrm{d}}\cdot\mathrm{pre}_\eta(\theta_n, \overline{\vec{x}}_{n})\enskip.
\]
Then the result follows.\qed
\end{proof}

\begin{remark}
In the proof of the above result, which generalizes the existing proof from LRSM to pRSM, 
the crucial property of pRSM we use is for assignments (locations in $L_a$) where we
used the well-definedness of $\expv_R\left(g_\loc,\overline{\vec{x}}_n\right)$ due to polynomials.
For more general RSMs if the well-definedness of $\expv_R\left(g_\loc,\overline{\vec{x}}_n\right)$ 
can be ensured then our proof ensures that the above result holds as well.
\end{remark}

To prove Theorem~\ref{thm:supermartingale-correctness}, one also needs an important property which states that an RSM falls below zero almost surely.

\begin{proposition}\label{prop:rsm}\cite{HolgerPOPL,DBLP:conf/popl/ChatterjeeFNH16}
Let $\{X_n\}_{n\in\Nset_0}$ be an RSM w.r.t a filtration $\{\mathcal{F}_n\}_{n\in\mathbb{N}_0}$ and constants $K,\epsilon$ (cf. Definition~\ref{def:rsm}).
Let $Z$ be the random variable defined by
$Z:=\min\{n\in\Nset_0\mid X_n<0\}$ with $\min\emptyset:=\infty$, denoting the first time $n$ that the RSM drops below~$0$.
Then $\probm(Z<\infty)=1$ and $\expv(Z)\le \frac{\expv(X_0)-K}{\epsilon}$~.
\end{proposition}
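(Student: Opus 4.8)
The plan is to control a \emph{stopped-and-shifted} version of the process, deduce that it is a supermartingale, and then combine the resulting mean bound with the almost-sure lower bound $K$ and the monotone convergence theorem. First I would record two elementary facts: $Z$ is a stopping time with respect to $\{\mathcal{F}_n\}$, so the event $\{Z>n\}$ is $\mathcal{F}_n$-measurable for every $n$; and by the very definition of $Z$ we have $X_n\ge 0$ almost surely on $\{Z>n\}$ (equivalently, for every $n<Z$). The central object is the process
\[
Y_n:=X_{n\wedge Z}+\epsilon\cdot(n\wedge Z)\enskip.
\]

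The first step is to show that $\{Y_n\}$ is a supermartingale. On $\{Z\le n\}$ one has $n\wedge Z=(n+1)\wedge Z=Z$, so $Y_{n+1}=Y_n$; on $\{Z>n\}$ one has $Y_{n+1}-Y_n=X_{n+1}-X_n+\epsilon$. Since $\{Z>n\}\in\mathcal{F}_n$, taking conditional expectations gives
\[
\E\bigl(Y_{n+1}-Y_n\mid\mathcal{F}_n\bigr)=\mathbf{1}_{Z>n}\cdot\bigl(\E(X_{n+1}\mid\mathcal{F}_n)-X_n+\epsilon\bigr)\enskip.
\]
On $\{Z>n\}$ we have $X_n\ge 0$, so the RSM difference condition of Definition~\ref{def:rsm} yields $\E(X_{n+1}\mid\mathcal{F}_n)\le X_n-\epsilon$, forcing the bracketed term to be $\le 0$. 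Hence $\E(Y_{n+1}\mid\mathcal{F}_n)\le Y_n$, and in particular $\E(Y_n)\le\E(Y_0)=\E(X_0)$, using $Y_0=X_0$.

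Next I would bring in the lower bound. Because $X_{n\wedge Z}\ge K$ almost surely, we get $Y_n\ge K+\epsilon\cdot(n\wedge Z)$, and therefore
\[
\E(X_0)\ge\E(Y_n)\ge K+\epsilon\cdot\E(n\wedge Z)\enskip,
\]
so $\E(n\wedge Z)\le(\E(X_0)-K)/\epsilon$ for every $n$. Since $n\wedge Z\uparrow Z$ pointwise, the monotone convergence theorem gives $\E(Z)=\lim_{n}\E(n\wedge Z)\le(\E(X_0)-K)/\epsilon<\infty$; finiteness of the expectation then forces $\probm(Z<\infty)=1$.

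The main obstacle is the supermartingale verification: one must split on the $\mathcal{F}_n$-measurable event $\{Z>n\}$ and invoke the difference condition precisely where $X_n\ge 0$ is guaranteed, which is exactly the role of the indicator $\mathbf{1}_{X_n\ge 0}$ in Definition~\ref{def:rsm} -- it ensures the deterministic shift by $\epsilon$ cancels the expected decrease before stopping and contributes nothing after. Once this is established, the lower bound $K$ and the passage to the limit via monotone convergence are routine.
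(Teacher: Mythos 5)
Your proof is correct. One thing to note up front: the paper does not prove this proposition at all --- it is imported by citation from \cite{HolgerPOPL,DBLP:conf/popl/ChatterjeeFNH16} --- so the relevant comparison is with the technique used in those works and, tellingly, with the technique this paper uses elsewhere. Your argument (stop the process at $Z$, add the compensator $\epsilon\cdot(n\wedge Z)$, verify the supermartingale property by splitting on the $\mathcal{F}_n$-measurable event $\{Z>n\}$ so that the RSM difference condition applies exactly where $X_n\ge 0$ is guaranteed, then combine the lower bound $K$ with monotone convergence) is precisely the standard route, and it is the same device the paper deploys in Appendix~\ref{app:concentration} to prove Theorem~\ref{thm:concentration}: there, Proposition~\ref{prop:hoeffding} shows that $Y_n=X_n+\epsilon\cdot(\min\{T_P,n\}-1)$ is a supermartingale via an identical split on the event $\{T_P>n\}$. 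The only points you leave implicit are routine: integrability of $Y_n$ (from $|Y_n|\le\sum_{k\le n}|X_k|+\epsilon n$ and $\expv(|X_k|)<\infty$ in Definition~\ref{def:rsm}), and adaptedness of $\{X_n\}$ to $\{\mathcal{F}_n\}$, which is implicit in the definition of an RSM and is what makes $Z$ a stopping time. For completeness, there is an even more elementary route to the same bound that avoids stopping times entirely: taking full expectations in the RSM inequality gives $\expv(X_{n+1})\le\expv(X_n)-\epsilon\cdot\probm(X_n\ge 0)\le\expv(X_n)-\epsilon\cdot\probm(Z>n)$ since $\{Z>n\}\subseteq\{X_n\ge 0\}$; telescoping and using $\expv(X_{n+1})\ge K$ yields $\sum_{k=0}^{n}\probm(Z>k)\le(\expv(X_0)-K)/\epsilon$, and letting $n\to\infty$ gives $\expv(Z)=\sum_{k\ge 0}\probm(Z>k)\le(\expv(X_0)-K)/\epsilon$, hence $\probm(Z<\infty)=1$. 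Your version buys a little more --- the supermartingale $Y_n$ is exactly the object one reuses for the Hoeffding-type concentration argument --- while the summation version is shorter if only this proposition is needed.
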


Now the proof for Theorem~\ref{thm:supermartingale-correctness} is as follows.

\textbf{Theorem~\ref{thm:supermartingale-correctness}.}
If there exists a $d$-pRSM $\eta$ w.r.t $(P,I)$ with constants $\epsilon,K$ (cf. Definition~\ref{def:prsm}), then $P$ is a.s. terminating and $\Eval(P)\le \UB(P):=\frac{\eta(\loc_0,\initval)-K}{\epsilon}$.
\begin{proof}
Let $\eta$ be a $d$-pRSM and $\{X_n\}_{n\in\Nset_0}$ be the stochastic process defined in Lemma~\ref{lemma:condexp}.
By Lemma~\ref{lemma:condexp}, C4 and the fact that $K\le -\epsilon$, $\{X_n\}_{n\in\Nset_0}$ is a ranking-supermartingale (w.r.t $\{\mathcal{H}_n\}_{n\in\Nset}$).
Then by C2, C3 and Proposition~\ref{prop:rsm},
\[
\Eval(P)=\sup_\sigma\expv^\sigma(T_P)\le\frac{\eta(\loc_0,\initval)-K}{\epsilon}\enskip.
\]\qed
\end{proof}

\section{Proof of Theorem~\ref{thm:concentration}}\label{app:concentration}

To prove Theorem~\ref{thm:concentration}, we need the following concentration inequality. 

\begin{theorem}[Hoeffding's Inequality~\cite{Hoeffding1963inequality,DBLP:conf/popl/ChatterjeeFNH16}]\label{thm:hoeffding}
Let $\{X_n\}_{n\in\mathbb{N}}$ be a supermartingale w.r.t some filtration $\{\mathcal{F}_n\}_{n\in\mathbb{N}}$ and $\{[a_n,b_n]\}_{n\in\mathbb{N}}$ be a sequence of intervals of positive length in $\mathbb{R}$.
If $X_1$ is a constant random variable and $X_{n+1}-X_n\in [a_n,b_n]$ a.s. for all $n\in\mathbb{N}$, then
\[
\mathbb{P}(X_n-X_1\ge\lambda)\le e^{-\frac{2\lambda^2}{\sum_{k=2}^n(b_k-a_k)^2}}
\]
for all $n\in\mathbb{N}$ and $\lambda> 0$.
\end{theorem}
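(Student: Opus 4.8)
The plan is to establish this as the standard Azuma--Hoeffding inequality via the exponential-moment (Chernoff) method, so that the core reduces to a single conditional exponential-moment estimate for each bounded increment. First I would fix $t>0$ and apply Markov's inequality to the nonnegative random variable $e^{t(X_n-X_1)}$, giving $\mathbb{P}(X_n-X_1\ge\lambda)\le e^{-t\lambda}\,\mathbb{E}\!\left[e^{t(X_n-X_1)}\right]$. The whole problem then becomes controlling the moment generating function of $X_n-X_1$, which telescopes as $X_n-X_1=\sum_k D_k$ with increments $D_k:=X_k-X_{k-1}$, each lying almost surely in an interval of length $b_k-a_k$ by hypothesis.

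The technical heart, and the step I expect to be the main obstacle, is the per-increment conditional bound $\mathbb{E}[e^{tD_k}\mid\mathcal{F}_{k-1}]\le e^{t^2(b_k-a_k)^2/8}$. Here I would invoke Hoeffding's lemma: for a random variable $Y$ with mean zero supported in an interval $[c,d]$ one has $\mathbb{E}[e^{tY}]\le e^{t^2(d-c)^2/8}$. The supermartingale hypothesis only gives $\mu_k:=\mathbb{E}[D_k\mid\mathcal{F}_{k-1}]\le 0$ rather than $=0$, so I would recenter: set $\tilde D_k:=D_k-\mu_k$, which has conditional mean zero and lies in the interval $[a_k-\mu_k,b_k-\mu_k]$ of the same length $b_k-a_k$. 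Applying Hoeffding's lemma conditionally to $\tilde D_k$ and then using $e^{t\mu_k}\le 1$ (valid since $t>0$ and $\mu_k\le 0$) yields the claimed bound. To prove Hoeffding's lemma itself I would write $\psi(t):=\log\mathbb{E}[e^{tY}]$, note $\psi(0)=\psi'(0)=0$, and identify $\psi''(t)$ with the variance of $Y$ under the exponentially tilted measure $\mathrm{d}\mathbb{Q}_t\propto e^{tY}\,\mathrm{d}\mathbb{P}$; since $Y$ stays in an interval of length $d-c$, Popoviciu's inequality bounds this variance by $(d-c)^2/4$, so a second-order Taylor expansion gives $\psi(t)\le t^2(d-c)^2/8$. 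Alternatively one can use convexity of $x\mapsto e^{tx}$ on $[c,d]$ and a direct analysis of the resulting one-parameter function, which avoids differentiating under the expectation.

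With the per-increment bound in hand, I would peel off the increments from the outermost one inward using the tower property: conditioning on $\mathcal{F}_{n-1}$, the factor $e^{t\sum_{k<n}D_k}$ is measurable and pulls out, leaving the conditional factor $\mathbb{E}[e^{tD_n}\mid\mathcal{F}_{n-1}]\le e^{t^2(b_n-a_n)^2/8}$; iterating down to $X_1$ (a constant, hence contributing nothing) produces $\mathbb{E}[e^{t(X_n-X_1)}]\le e^{t^2 S/8}$ with $S:=\sum_{k=2}^n(b_k-a_k)^2$. Combining with the Markov step gives $\mathbb{P}(X_n-X_1\ge\lambda)\le e^{-t\lambda+t^2 S/8}$ for every $t>0$. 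Finally I would optimize the exponent over $t$: the quadratic $-t\lambda+t^2S/8$ is minimized at $t=4\lambda/S$, where its value is exactly $-2\lambda^2/S$, which yields the stated bound $e^{-2\lambda^2/S}$. The only bookkeeping point to watch is the index range of the increments, so that the denominator matches the $\sum_{k=2}^n(b_k-a_k)^2$ in the statement; this is a matter of relabeling the $n-1$ increments and does not affect the argument.
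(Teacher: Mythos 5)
Your proof is correct, but note that the paper never proves this theorem: it is imported by citation from~\cite{Hoeffding1963inequality,DBLP:conf/popl/ChatterjeeFNH16}, so there is no in-paper argument to compare against, and your write-up supplies a self-contained derivation where the paper relies on the literature. What you give is the standard Azuma--Hoeffding proof for supermartingales, and every step checks out: Markov's inequality applied to $e^{t(X_n-X_1)}$, the telescoping decomposition into increments $D_k=X_k-X_{k-1}$, the conditional Hoeffding lemma after recentering by the $\mathcal{F}_{k-1}$-measurable conditional mean $\mu_k\le 0$ (with $e^{t\mu_k}\le 1$ for $t>0$ absorbing the supermartingale slack --- this is exactly the point where the martingale proof must be adapted), tower-property peeling of one increment at a time, and optimization at $t=4\lambda/S$ giving exponent $-2\lambda^2/S$. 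You also implicitly handle the two points that typically cause trouble: the recentered increment $\tilde D_k$ lives in a \emph{random} interval whose endpoints are $\mathcal{F}_{k-1}$-measurable but whose length $b_k-a_k$ is deterministic, so the conditional bound $e^{t^2(b_k-a_k)^2/8}$ is a constant and pulls out of the iterated expectation; and almost-sure boundedness of the increments guarantees all exponential moments are finite, so the Chernoff step is legitimate. Your closing remark on index bookkeeping is also warranted: as stated, the hypothesis $X_{n+1}-X_n\in[a_n,b_n]$ pairs the $k$-th increment with $[a_{k-1},b_{k-1}]$ while the bound sums $(b_k-a_k)^2$ over $k=2,\dots,n$, a harmless off-by-one relabeling of the $n-1$ increments; in the paper's only application (Proposition~\ref{prop:hoeffding} feeding Theorem~\ref{thm:concentration}) all intervals coincide, so the denominator is $(n-1)(b-a)^2$ under either convention, exactly as used there.
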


Now we fix a difference-bounded $d$-pRSM $\eta$ w.r.t $[a,b]$. 
Recall that $X_n:=\eta(\theta_n, \overline{\vec{x}}_n)$. 
Define the stochastic process $\{Y_n\}_{n\in\mathbb{N}}$ by:
\[
Y_n=X_n+\epsilon\cdot(\min\{T_P,n\}-1)\enskip.
\]
The following proposition shows that $\{Y_n\}_{n\in\mathbb{N}}$ is a supermartingale and satisfies the requirements of Hoeffding's Inequaltiy.

\begin{proposition}\label{prop:hoeffding}
$\{Y_n\}_{n\in\mathbb{N}}$ is a supermartingale and $Y_{n+1}-Y_n\in [a+\epsilon,b+\epsilon]$ almost surely for all $n \in \Nats$.
\end{proposition}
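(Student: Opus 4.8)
The plan is to work directly with the one-step increment of $Y_n$. First I would record the telescoping identity $\min\{T_P,n+1\}-\min\{T_P,n\}=\mathbf{1}_{T_P>n}$, which yields
\[
Y_{n+1}-Y_n = (X_{n+1}-X_n) + \epsilon\cdot\mathbf{1}_{T_P>n}.
\]
Since $\bot$ is absorbing in the semantics, the event $\{T_P>n\}$ coincides with $\{\theta_n\ne\bot\}$, so $\mathbf{1}_{T_P>n}$ is $\mathcal{H}_n$-measurable. I would also note at the outset that each $X_n=\eta(\theta_n,\overline{\vec{x}}_n)$ is bounded (finitely many labels, polynomial $\eta$, and $\overline{\vec{x}}_n$ bounded since every sampling variable is bounded), and $\min\{T_P,n\}\le n$, so $\expv^\sigma(|Y_n|)<\infty$; moreover $Y_n$ is $\mathcal{H}_n$-measurable. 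Hence $\{Y_n\}$ is an adapted, integrable process and the supermartingale condition is well posed.

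For the supermartingale property I would condition on $\mathcal{H}_n$. Using $\mathcal{H}_n$-measurability of $X_n$ and of $\mathbf{1}_{\theta_n\ne\bot}$, together with Lemma~\ref{lemma:condexp}, I obtain
\[
\expv^\sigma(Y_{n+1}-Y_n\mid\mathcal{H}_n)\le \mathrm{pre}_\eta(\theta_n,\overline{\vec{x}}_n)-\eta(\theta_n,\overline{\vec{x}}_n)+\epsilon\cdot\mathbf{1}_{\theta_n\ne\bot},
\]
so it suffices to show the right-hand side is $\le 0$ almost surely. The enabling observation is that, by reachability, $\overline{\vec{x}}_n\models I(\theta_n)$ holds almost surely, which makes the invariant-guarded conditions of Definition~\ref{def:prsm} available. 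I would then split on $\theta_n$: if $\theta_n=\bot$ the indicator vanishes and $\mathrm{pre}_\eta=\eta$ by Definition~\ref{def:preexpectation}, giving $0$; if $\theta_n\ne\bot$ and $\overline{\vec{x}}_n\models\TERM{\theta_n}$ the RSM-difference condition C4 gives $\mathrm{pre}_\eta(\theta_n,\overline{\vec{x}}_n)\le\eta(\theta_n,\overline{\vec{x}}_n)-\epsilon$, so the bound is $\le 0$.

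The delicate case, which I expect to be the main obstacle, is $\theta_n\in\locs_\mathrm{c}$ with $\overline{\vec{x}}_n\not\models\TERM{\theta_n}$: here the guard taken leads to $\bot$, so C4 does not apply. I would handle it by unfolding Definition~\ref{def:preexpectation} to see that $\mathrm{pre}_\eta(\theta_n,\overline{\vec{x}}_n)=\eta(\bot,\overline{\vec{x}}_n)=K$, and then combine $K\le-\epsilon$ with the nonnegativity $\eta(\theta_n,\overline{\vec{x}}_n)\ge 0$ supplied by C2 (valid since the invariant holds a.s.) to conclude $\mathrm{pre}_\eta-\eta+\epsilon\le K-0+\epsilon\le 0$. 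Getting the $\TERM{\cdot}$ bookkeeping exactly right — namely that $\overline{\vec{x}}_n\not\models\TERM{\loc}$ at a conditional label forces the transition target to be $\bot$ — is the crux that ties the two conditions together.

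Finally, for the increment bound I would argue pointwise. When $\theta_n=\bot$ the increment is $0$, which lies in $[a+\epsilon,b+\epsilon]$ because $a\le-\epsilon\le b$ (the inequalities noted just after Definition~\ref{def:dbprsm}). When $\theta_n\ne\bot$ the increment equals $(X_{n+1}-X_n)+\epsilon$, and it remains to check $X_{n+1}-X_n\in[a,b]$ almost surely; this is exactly the three bullets of Definition~\ref{def:dbprsm}, applied according to whether $\theta_n\in\locs_\mathrm{d}\cup\locs_\mathrm{p}$ (valuation unchanged), $\theta_n\in\locs_\mathrm{c}$ (guard satisfied, so the valuation is in $I(\theta_n)$ and the guard), or $\theta_n\in\locs_\mathrm{a}$ (updated valuation, with the sample $\overline{\vec{r}}_n$ lying in its support a.s.), once more invoking $\overline{\vec{x}}_n\models I(\theta_n)$ a.s. Adding $\epsilon$ gives membership in $[a+\epsilon,b+\epsilon]$, completing both claims.
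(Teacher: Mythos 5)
Your proof is correct, and its skeleton coincides with the paper's: the same telescoping identity $\min\{T_P,n+1\}-\min\{T_P,n\}=\mathbf{1}_{T_P>n}$, the same decomposition of $Y_{n+1}-Y_n$, and a pointwise argument for the difference bound. Where you genuinely diverge is in how the supermartingale inequality is closed. The paper invokes, as a black box, the fact that $\{X_n\}$ is a ranking supermartingale (established en route to Theorem~\ref{thm:supermartingale-correctness}), writes $\expv(X_{n+1}\mid\mathcal{H}_n)-X_n\le-\epsilon\cdot\mathbf{1}_{X_n\ge 0}$, and then uses the identity $\mathbf{1}_{X_n\ge 0}=\mathbf{1}_{T_P>n}$ (forced by C2 and C3, since $K\le-\epsilon<0$) so that the two $\epsilon$-terms cancel exactly. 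You instead return to Lemma~\ref{lemma:condexp} and perform an explicit case analysis on the type of $\theta_n$, resolving the delicate conditional-label case ($\theta_n\in\locs_\mathrm{c}$ with $\overline{\vec{x}}_n\not\models\TERM{\theta_n}$, where C4 is unavailable) via $\mathrm{pre}_\eta(\theta_n,\overline{\vec{x}}_n)=\eta(\bot,\overline{\vec{x}}_n)=K\le-\epsilon$ combined with C2. The two routes have the same mathematical content --- your case split is precisely what is compressed inside the paper's one-line claim that $\{X_n\}$ is an RSM --- but yours is self-contained and more elementary, never needing the equivalence $X_n\ge 0\iff T_P>n$, while the paper's is more modular, reusing a property it must prove anyway for Theorem~\ref{thm:supermartingale-correctness} and obtaining the conclusion by exact cancellation of indicators. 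Your write-up also supplies details the paper elides: adaptedness and integrability of $Y_n$ (via boundedness of the sampling variables), the justification that $X_{n+1}-X_n\in[a,b]$ a.s.\ follows from the three bullets of Definition~\ref{def:dbprsm} together with $\overline{\vec{x}}_n\models I(\theta_n)$ at reachable configurations, and the observation that the zero increment at $\bot$ lies in $[a+\epsilon,b+\epsilon]$ because $a\le-\epsilon\le b$.
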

\begin{proof}
Consider the following random variable:
\[
U_n=\min\{T_P, n+1\}-\min\{T_P,n\}\enskip,
\]
and observe that this is equal to $\mathbf{1}_{T_P>n}$.
From the properties of conditional expectation~\cite[Page 88]{probabilitycambridge} and the facts
that (i) the event $T_P>n$ is measurable in $\mathcal{F}_n$
(which implies that $\expv(\mathbf{1}_{T_P>n}\mid \mathcal{F}_n)= \mathbf{1}_{T_P>n}$);
and (ii) $X_n\ge 0$ iff $T_P>n$ (cf. conditions C2 and C3),
we have
\begin{eqnarray*}
\expv(Y_{n+1}\mid\mathcal{F}_n)-Y_n 
&=&\expv(X_{n+1}\mid\mathcal{F}_n)-X_n 
+\epsilon\cdot\expv(U_n \mid\mathcal{F}_n) \\
&=&\expv(X_{n+1}\mid\mathcal{F}_n)-X_n+\epsilon\cdot\expv(\mathbf{1}_{T_P>n}\mid\mathcal{F}_n)\\
&=&\expv(X_{n+1}\mid\mathcal{F}_n)-X_n+\epsilon\cdot\mathbf{1}_{T_P>n} \\
&\le &-\epsilon\cdot\mathbf{1}_{X_n\ge 0}+\epsilon\cdot\mathbf{1}_{T_P>n}\\
&=& 0\enskip.
\end{eqnarray*}
Note that the inequality above is due to the fact that $X_n$ is a ranking supermartingale.
Moreover, since $T_P\le n$ implies $\theta_n=\loc_P^\lout$ and $X_{n+1}=X_n$ we have that
$(X_{n+1}-X_n)= \mathbf{1}_{T_P>n}\cdot (X_{n+1}-X_n)$.
Hence we have
\begin{eqnarray*}
Y_{n+1}-Y_n &=& X_{n+1}-X_n+\epsilon\cdot U_n \\
&=& (X_{n+1}-X_n) +\epsilon\cdot \mathbf{1}_{T_P>n}   \\
&=& \mathbf{1}_{T_P>n}\cdot (X_{n+1}-X_n+\epsilon)\enskip.  \\
\end{eqnarray*}
Hence $Y_{n+1}-Y_n\in [a+\epsilon,b+\epsilon]$.\qed
\end{proof}

\begin{proof}[of Theorem~\ref{thm:concentration}]
Let $W_0:=Y_1=\eta(\loc_0,\initval)$.
Fix any demonic strategy $\sigma$.
By Hoeffing's Inequality, for all $\lambda>0$, we have 
$\mathbb{P}(Y_n-W_0\ge\lambda)\le e^{-\frac{2\lambda^2}{(n-1)(b-a)^2}}$.
Note that $T_P> n$ iff $X_n\ge 0$ by conditions C2 and C3 of pRSM.
Let $\alpha=\epsilon(n-1)-W_0$ and $\wh{\alpha}=\epsilon(\min\{n,T_P\}-1)-W_0$.
Note that with the conjunct $T_P>n$ we have that $\alpha$ and $\wh{\alpha}$
coincide.
Thus, for $\mathbb{P}(T_P > n)=\mathbb{P}(X_n\ge 0\wedge T_P>n)$ we have
\begin{eqnarray*}
\mathbb{P}(X_n\ge 0\wedge T_P>n) 
&=&\mathbb{P}((X_n+\alpha \ge \alpha ) \wedge (T_P>n))\\
&=&\mathbb{P}((X_n+\wh{\alpha} \ge \alpha ) \wedge (T_P>n))\\
&\le &\mathbb{P}((X_n+ \wh{\alpha} \geq \alpha))\\
&=&\mathbb{P}(Y_n-Y_1\ge \epsilon(n-1)-W_0)\\
&\le &e^{-\frac{2(\epsilon(n-1)-W_0)^2}{(n-1)(b-a)^2}}
\end{eqnarray*}
for all $n>\frac{W_0}{\epsilon}+1$.
The first equality is obtained by simply adding $\alpha$ on both
sides, and the second equality uses that because of the conjunct $T_P>n$
we have $\min\{n,T_P\}=n$ which ensures $\alpha=\wh{\alpha}$.
The first inequality is obtained by simply dropping the conjunct $T_P>n$.
The following equality is by definition, and the final inequality is an application of
Hoeffding's Inequality.\qed
\end{proof}

\begin{remark}
The above result holds for general difference-bounded RSMs and does not rely on 
the fact that it is a pRSM.
\end{remark}

\section{Proof for Theorem~\ref{thm:prsmdecision}}\label{app:thmprsmdecision}

\noindent\textbf{Theorem~\ref{thm:prsmdecision}}. 
The problem whether a (difference-bounded) $d$-pRSM w.r.t $(P,I)$ exists is decidable.
\begin{proof}
Let $\mbox{\sl M}$ be the set of all monomials of degree no greater than $d$.
Let a template for a $d$-pRSM be $\sum_{h\in M}a_h\cdot h$, where $a_h$ are scalar variables to be resolved.
Then it is straightforward that conditions C1-C4 can be directly encoded as formulae in the first-order theory of reals which is first existentially quantified over the variables $a_h, K,\epsilon$ and then universally quantified over the vector variable $\vec{x}$.
The conditions for difference-bounded pRSMs can also be encoded as formulae which are firstly existentially quantified over the scalar variables $a,b$ and then universally quantified over vector variable $\vec{x}$. 
Thus, the existence a (difference-bounded) $d$-pRSM is reduced to the validity of a formula in the first-order theory of reals, which is decidable ~\cite{Tarski1951,BasuPollackRoy2006}.\qed
\end{proof}

\section{Proof of Theorem~\ref{thm:synthesissoundness}}
\label{app:thmsynthesissoundness}

\noindent\textbf{Theorem~\ref{thm:synthesissoundness}.}
Any function $\eta$ synthesized through the algorithm \PRSMSynth{} is a valid pRSM.   
\begin{proof}
To prove the soundness we observe that Steps~1-3 of the algorithm are 
basically instantiation of the template and obtaining the coefficients.
Step~4 is the pre-expectation computation based on the definition.
The crucial step is Step~5 and Step~6.
The soundness of Step~5 and Step~6 follows from the soundness of 
Positivstellensatz's (cf. Theorem~\ref{thm:schmuedgen} to~\ref{thm:handelman}) 
regardless of the compactness of  $\SAT{\Gamma}$: either Eq.~($\ddag$), Eq.~($\S$) or Eq.~($\#$) guarantees that formula $(\dag)$ holds with 
`$g(\vec{x}) > 0$' replaced  by `$g(\vec{x})\ge 0$'. 
It ensures that the synthesized pRSM is indeed a pRSM.
\qed
\end{proof}

\section{Experimental Details}\label{app:experiments}

In the following description of the programs, we use ``$a\le f\le b$'' for an abbreviation of ``$a\le f\wedge f\le b$'', and ``$(x_{n_1},\dots,x_{n_k})^\mathrm{T}:=(E_{n_1},\dots,E_{n_k})^\mathrm{T}$'' as a compact form for assignment ``$x_{n_1},\dots,x_{n_k}:=E_{n_1},\dots,E_{n_k}$''. 
We also use $\mathrm{UNIF}(a,b)$ to denote the uniform distribution on $[a,b]$.
Besides, the invariants are written in a bracketed fashion $[\dots]$ and are put directly after the labels they are attached to. 
In all our examples the invariants are straightforward to obtain directly from the program. 

\begin{example}[Logistic Map]\label{ex:logisticmap}
Consider the logistic-map example adopted in \cite{DBLP:conf/vmcai/Cousot05}. 
The program is depicted in Fig.~\ref{fig:logisticmap}.
\end{example}

\lstset{language=affprob}
\lstset{tabsize=3}
\newsavebox{\proglogisticmap}
\begin{lrbox}{\proglogisticmap}
\begin{lstlisting}[mathescape]         
[$0\le a\le 1\wedge 0\le x\le 1$]
while $0\le a\le 0.999\wedge 0.001\le x\le 1$ do
   [$0\le a\le 0.999\wedge 0.001\le x\le 1$] 
   $x:=a*x*(1-x)$
od
\end{lstlisting}
\end{lrbox}
\begin{figure}[t]
\centering
\usebox{\proglogisticmap}
\caption{Logistic Map}
\label{fig:logisticmap}
\vspace{-1em}
\end{figure}

\begin{example}[Decay]\label{ex:decay}
Consider a decay example in Fig.~\ref{fig:decay} which is a discretized randomized version of the system of differential equations $x'=-x+y,y'=-x-y$; 
the ODE describes the exponential decay of any initial value to the origin.  
\end{example}

\lstset{language=affprob}
\lstset{tabsize=3}
\newsavebox{\progdecay}
\begin{lrbox}{\progdecay}
\begin{lstlisting}[mathescape]

[$x^2+y^2\le 2$] 
while $0.1\le x^2+y^2\le 1$ do
     [$0.1\le x^2+y^2\le 1$]
     $\begin{pmatrix} x \\ y\end{pmatrix}:=\begin{pmatrix} \mathrm{UNIF}(0.98, 1)*x + 0.01*y \\ \mathrm{UNIF}(0.98, 1)*y-0.01*x \end{pmatrix}$ 
od
\end{lstlisting}
\end{lrbox}
\begin{figure}[h]
\centering
\usebox{\progdecay}
\caption{Decay}
\label{fig:decay}
\end{figure}

\begin{example}[Random Walk]\label{ex:randomwalk}
Consider a demonic random-walk example in Fig.~\ref{fig:rdwalk} which mimics a random walk within a bounded region; 
the region is defined through two non-linear parabola curves instead of linear constraints.
\end{example}

\lstset{language=affprob}
\lstset{tabsize=3}
\newsavebox{\progrdwalk}
\begin{lrbox}{\progrdwalk}
\begin{lstlisting}[mathescape]

[$x^2+y^2\le 2$]

while $x^2+y\le 1\wedge x^2-y\le 1$ do
    [$x^2+y\le 1\wedge x^2-y\le 1$]
    $\begin{pmatrix} x \\ y\end{pmatrix}:=\begin{pmatrix} x+\mathrm{UNIF}(-0.1,0.1) \\ y+\mathrm{UNIF}(-0.1,0.1) \end{pmatrix}$ 
od
\end{lstlisting}
\end{lrbox}
\begin{figure}[h]
\centering
\usebox{\progrdwalk}
\caption{Random Walk}
\label{fig:rdwalk}
\end{figure}

\begin{example}[Gambler's Ruin]\label{ex:gamblersruin}
Finally, we consider the gambler's ruin in Example~\ref{ex:runningexample} with invariants given in Example~\ref{ex:prsm}.
\end{example}

\begin{example}[Gambler's Ruin Variant]\label{ex:gamblersruin2}
Consider a variant of Example~\ref{ex:gamblersruin} depicted in Fig.~\ref{fig:gamblersruin2}. 
Note that this example is another affine program that also does not admit a linear ranking supermartingale. 
\end{example}

\begin{example}[Nested Loop]\label{ex:nestedloop}
Consider the example in Fig.~\ref{fig:nestedloop}.
The example is a nested loop with two independent loop-control variables.
\end{example}

\lstset{language=affprob}
\lstset{tabsize=3}
\newsavebox{\proggamblersruinb}
\begin{lrbox}{\proggamblersruinb}
\begin{lstlisting}[mathescape]

[$0.7\le x\le y+0.3$]
while $1\le x\le y$ do
    [$1\le x\le y$]
    if $\star$ do     
        [$1\le x\le y$]        
        $x:=x+\mathrm{UNIF}(-0.3,0.3)$ 
    else 
        [$1\le x\le y$] 
        if prob(0.5) do
           [$1\le x\le y$]
           $x:=x+0.1$
        else 
           [$1\le x\le y$]
           $x:=x-0.1$
        fi
    fi
od
\end{lstlisting}
\end{lrbox}
\begin{figure}[h]
\centering
\usebox{\proggamblersruinb}
\caption{Gambler's Ruin Variant}
\label{fig:gamblersruin2}
\end{figure}

\lstset{language=affprob}
\lstset{tabsize=3}
\newsavebox{\prognestedloop}
\begin{lrbox}{\prognestedloop}
\begin{lstlisting}[mathescape]
[$x\le m+0.2\wedge n\ge 0$]
while $x\le m$ do
    [$x\le m\wedge n\ge 0$]
    y:=0;
    [$x\le m\wedge y\le n+0.2 \wedge n\ge 0$]
    while $y\le n$ do
        [$x\le m\wedge y\le n \wedge n\ge 0$]       
        $y:=y+\mathrm{UNIF}(-0.1,0.2)$
    od;
    [$x\le m\wedge y\ge n \wedge n\ge 0$]
    $x:=x+\mathrm{UNIF}(-0.1,0.2)$
od
\end{lstlisting}
\end{lrbox}
\begin{figure}[h]
\centering
\usebox{\prognestedloop}
\caption{Nested Loop}
\label{fig:nestedloop}
\end{figure}

\begin{thebibliography}{10}
\providecommand{\url}[1]{\texttt{#1}}
\providecommand{\urlprefix}{URL }

\bibitem{sedumi}
{SeDuMi 1.3}. {http://sedumi.ie.lehigh.edu/} (2008)

\bibitem{cplex}
{IBM ILOG CPLEX Optimizer Interactive Optimizer Community Edition 12.6.3.0}.
  {http://www-01.ibm.com/software/integration/optimization/cplex-optimizer/}
  (2010)

\bibitem{sostools}
{SOSTOOLS v3.00}. {http://www.cds.caltech.edu/sostools/} (2013)

\bibitem{DBLP:journals/fac/BabicCHR13}
Babic, D., Cook, B., Hu, A.J., Rakamaric, Z.: Proving termination of nonlinear
  command sequences. Formal Asp. Comput.  25(3),  389--403 (2013)

\bibitem{BaierBook}
Baier, C., Katoen, J.P.: Principles of model checking. MIT Press (2008)

\bibitem{BasuPollackRoy2006}
Basu, S., Pollack, R., Roy, M.: Algorithms in Real Algebraic Geometry.
  Springer, 2nd edn. (2006)

\bibitem{Billingsley:book}
Billingsley, P.: {Probability and Measure}. Wiley, 3rd edn. (1995)

\bibitem{BG05}
Bournez, O., Garnier, F.: Proving positive almost-sure termination. In: Giesl,
  J. (ed.) Term Rewriting and Applications, 16th International Conference,
  {RTA} 2005, Nara, Japan, April 19-21, 2005, Proceedings. Lecture Notes in
  Computer Science, vol. 3467, pp. 323--337. Springer (2005),
  \url{http://dx.doi.org/10.1007/978-3-540-32033-3_24}

\bibitem{DBLP:conf/cav/BradleyMS05}
Bradley, A.R., Manna, Z., Sipma, H.B.: Linear ranking with reachability. In:
  Etessami, K., Rajamani, S.K. (eds.) Computer Aided Verification, 17th
  International Conference, {CAV} 2005, Edinburgh, Scotland, UK, July 6-10,
  2005, Proceedings. Lecture Notes in Computer Science, vol. 3576, pp.
  491--504. Springer (2005)

\bibitem{DBLP:conf/vmcai/BradleyMS05}
Bradley, A.R., Manna, Z., Sipma, H.B.: Termination of polynomial programs. In:
  Cousot  \cite{DBLP:conf/vmcai/2005}, pp. 113--129

\bibitem{SriramCAV}
Chakarov, A., Sankaranarayanan, S.: Probabilistic program analysis with
  martingales. In: Sharygina, N., Veith, H. (eds.) Computer Aided Verification
  - 25th International Conference, {CAV} 2013, Saint Petersburg, Russia, July
  13-19, 2013. Proceedings. Lecture Notes in Computer Science, vol. 8044, pp.
  511--526. Springer (2013)

\bibitem{CFK16}
Chatterjee, K., Fu, H., Goharshady, A.K.: Termination analysis of probabilistic
  programs through positivstellensatz's. In: CAV (2016)

\bibitem{DBLP:conf/popl/ChatterjeeFNH16}
Chatterjee, K., Fu, H., Novotn{\'{y}}, P., Hasheminezhad, R.: Algorithmic
  analysis of qualitative and quantitative termination problems for affine
  probabilistic programs. In: Bod{\'{\i}}k, R., Majumdar, R. (eds.) Proceedings
  of the 43rd Annual {ACM} {SIGPLAN-SIGACT} Symposium on Principles of
  Programming Languages, {POPL} 2016, St. Petersburg, FL, USA, January 20 - 22,
  2016. pp. 327--342. {ACM} (2016),
  \url{http://doi.acm.org/10.1145/2837614.2837639}

\bibitem{DBLP:conf/tacas/ColonS01}
Col{\'{o}}n, M., Sipma, H.: Synthesis of linear ranking functions. In:
  Margaria, T., Yi, W. (eds.) Tools and Algorithms for the Construction and
  Analysis of Systems, 7th International Conference, {TACAS} 2001 Held as Part
  of the Joint European Conferences on Theory and Practice of Software, {ETAPS}
  2001 Genova, Italy, April 2-6, 2001, Proceedings. Lecture Notes in Computer
  Science, vol. 2031, pp. 67--81. Springer (2001)

\bibitem{DBLP:conf/vmcai/Cousot05}
Cousot, P.: Proving program invariance and termination by parametric
  abstraction, {L}agrangian relaxation and semidefinite programming. In: Cousot
   \cite{DBLP:conf/vmcai/2005}, pp. 1--24

\bibitem{DBLP:conf/popl/CousotC77}
Cousot, P., Cousot, R.: Abstract interpretation: {A} unified lattice model for
  static analysis of programs by construction or approximation of fixpoints.
  In: Graham, R.M., Harrison, M.A., Sethi, R. (eds.) Conference Record of the
  Fourth {ACM} Symposium on Principles of Programming Languages, Los Angeles,
  California, USA, January 1977. pp. 238--252. {ACM} (1977)

\bibitem{DBLP:conf/vmcai/2005}
Cousot, R. (ed.): Verification, Model Checking, and Abstract Interpretation,
  6th International Conference, {VMCAI} 2005, Paris, France, January 17-19,
  2005, Proceedings, Lecture Notes in Computer Science, vol. 3385. Springer
  (2005)

\bibitem{RandBook2}
Dubhashi, D., Panconesi, A.: Concentration of Measure for the Analysis of
  Randomized Algorithms. Cambridge University Press, New York, NY, USA, 1st
  edn. (2009)

\bibitem{Durrett}
Durrett, R.: Probability: Theory and Examples (Second Edition). Duxbury Press
  (1996)

\bibitem{EGK12}
Esparza, J., Gaiser, A., Kiefer, S.: Proving termination of probabilistic
  programs using patterns. In: Madhusudan, P., Seshia, S.A. (eds.) Computer
  Aided Verification - 24th International Conference, {CAV} 2012, Berkeley, CA,
  USA, July 7-13, 2012 Proceedings. Lecture Notes in Computer Science, vol.
  7358, pp. 123--138. Springer (2012)

\bibitem{FarkasLemma}
Farkas, J.: A fourier-f\'{e}le mechanikai elv alkalmaz\'{a}sai ({H}ungarian).
  Mathematikai\'{e}s Term\'{e}szettudom\'{a}nyi \'{E}rtesit\"{o}  12,  457--472
  (1894)

\bibitem{HolgerPOPL}
Fioriti, L.M.F., Hermanns, H.: Probabilistic termination: Soundness,
  completeness, and compositionality. In: Rajamani, S.K., Walker, D. (eds.)
  Proceedings of the 42nd Annual {ACM} {SIGPLAN-SIGACT} Symposium on Principles
  of Programming Languages, {POPL} 2015, Mumbai, India, January 15-17, 2015.
  pp. 489--501. {ACM} (2015)

\bibitem{rwfloyd1967programs}
Floyd, R.W.: Assigning meanings to programs. Mathematical Aspects of Computer
  Science  19,  19--33 (1967)

\bibitem{Foster53}
Foster, F.G.: On the stochastic matrices associated with certain queuing
  processes. The Annals of Mathematical Statistics  24(3),  pp. 355--360 (1953)

\bibitem{SemidefiniteProgramming}
Gr\"{o}tschel, M., Lovasz, L., Schrijver, A.: {G}eometric {A}lgorithms and
  {C}ombinatorial {O}ptimization. Springer-Verlag Berlin Heidelberg (1993)

\bibitem{HandelmanTheorem}
Handelman, D.: Representing polynomials by positive linear functions on compact
  convex polyhedra. Pacific J. Math.  132,  35--62 (1988)

\bibitem{Hoeffding1963inequality}
Hoeffding, W.: Probability inequalities for sums of bounded random variables.
  Journal of the American Statistical Association  58(301),  13--30 (1963)

\bibitem{MatrixAnalysis}
Horn, R.A., Johnson, C.R.: {M}atrix {A}nalysis. Cambridge University Press, 2nd
  edn. (2013)

\bibitem{Howard}
Howard, H.: Dynamic Programming and {Markov} Processes. MIT Press (1960)

\bibitem{AlgebraHungerford}
Hungerford, T.W.: {A}lgebra. Springer (1974)

\bibitem{kaelbling1998planning}
Kaelbling, L.P., Littman, M.L., Cassandra, A.R.: Planning and acting in
  partially observable stochastic domains. Artificial intelligence  101(1),
  99--134 (1998)

\bibitem{LearningSurvey}
Kaelbling, L.P., Littman, M.L., Moore, A.W.: Reinforcement learning: A survey.
  Journal of Artificial Intelligence Research  4,  237--285 (1996)

\bibitem{Kemeny}
Kemeny, J., Snell, J., Knapp, A.: Denumerable {Markov} Chains. D. Van Nostrand
  Company (1966)

\bibitem{KGFP09}
Kress-Gazit, H., Fainekos, G.E., Pappas, G.J.: Temporal-logic-based reactive
  mission and motion planning. IEEE Transactions on Robotics  25(6),
  1370--1381 (2009)

\bibitem{prism}
Kwiatkowska, M.Z., Norman, G., Parker, D.: {PRISM} 4.0: Verification of
  probabilistic real-time systems. In: Gopalakrishnan, G., Qadeer, S. (eds.)
  Computer Aided Verification - 23rd International Conference, {CAV} 2011,
  Snowbird, UT, USA, July 14-20, 2011. Proceedings. Lecture Notes in Computer
  Science, vol. 6806, pp. 585--591. Springer (2011)

\bibitem{MM04}
McIver, A., Morgan, C.: Developing and reasoning about probabilistic programs
  in \emph{pGCL}. In: Cavalcanti, A., Sampaio, A., Woodcock, J. (eds.)
  Refinement Techniques in Software Engineering, First Pernambuco Summer School
  on Software Engineering, {PSSE} 2004, Recife, Brazil, November 23-December 5,
  2004, Revised Lectures. Lecture Notes in Computer Science, vol. 3167, pp.
  123--155. Springer (2004), \url{http://dx.doi.org/10.1007/11889229_4}

\bibitem{MM05}
McIver, A., Morgan, C.: Abstraction, Refinement and Proof for Probabilistic
  Systems. Monographs in Computer Science, Springer (2005)

\bibitem{DBLP:conf/sas/Monniaux01}
Monniaux, D.: An abstract analysis of the probabilistic termination of
  programs. In: Cousot, P. (ed.) Static Analysis, 8th International Symposium,
  {SAS} 2001, Paris, France, July 16-18, 2001, Proceedings. Lecture Notes in
  Computer Science, vol. 2126, pp. 111--126. Springer (2001),
  \url{http://dx.doi.org/10.1007/3-540-47764-0_7}

\bibitem{RandBook}
Motwani, R., Raghavan, P.: Randomized Algorithms. Cambridge University Press,
  New York, NY, USA (1995)

\bibitem{PazBook}
Paz, A.: Introduction to probabilistic automata (Computer science and applied
  mathematics). Academic Press (1971)

\bibitem{DBLP:conf/vmcai/PodelskiR04}
Podelski, A., Rybalchenko, A.: A complete method for the synthesis of linear
  ranking functions. In: Steffen, B., Levi, G. (eds.) Verification, Model
  Checking, and Abstract Interpretation, 5th International Conference, {VMCAI}
  2004, Venice, January 11-13, 2004, Proceedings. Lecture Notes in Computer
  Science, vol. 2937, pp. 239--251. Springer (2004)

\bibitem{PutinarPositivstellensatz}
Putinar, M.: Positive polynomials on compact semi-algebraic sets. Indiana Univ.
  Math. J.  42,  969--984 (1993)

\bibitem{Rabin63}
Rabin, M.: Probabilistic automata. Information and Control  6,  230--245 (1963)

\bibitem{SumitPLDI}
Sankaranarayanan, S., Chakarov, A., Gulwani, S.: Static analysis for
  probabilistic programs: inferring whole program properties from finitely many
  paths. In: PLDI. pp. 447--458 (2013)

\bibitem{ScheidererSurvey}
Scheiderer, C.: Positivity and sums of squares: A guide to recent results. The
  {IMA} Volumes in Mathematics and its Applications  149,  271--324 (2008)

\bibitem{SchmudgenPositivstellensatz}
Schm\"{u}dgen, K.: The ${K}$-moment problem for compact semi-algebraic sets.
  Math. Ann.  289,  203--206 (1991)

\bibitem{DBLP:books/daglib/0090562}
Schrijver, A.: {T}heory of {L}inear and {I}nteger {P}rogramming.
  Wiley-Interscience series in discrete mathematics and optimization, Wiley
  (1999)

\bibitem{DBLP:journals/jossac/ShenWYZ13}
Shen, L., Wu, M., Yang, Z., Zeng, Z.: Generating exact nonlinear ranking
  functions by symbolic-numeric hybrid method. J. Systems Science {\&}
  Complexity  26(2),  291--301 (2013)

\bibitem{DBLP:conf/pods/SohnG91}
Sohn, K., Gelder, A.V.: Termination detection in logic programs using argument
  sizes. In: Rosenkrantz, D.J. (ed.) Proceedings of the Tenth {ACM}
  {SIGACT-SIGMOD-SIGART} Symposium on Principles of Database Systems, May
  29-31, 1991, Denver, Colorado, {USA}. pp. 216--226. {ACM} Press (1991)

\bibitem{Tarski1951}
Tarski, A.: A decision method for elementary algebra and geometry. In:
  {Q}uantifier {E}limination and {C}ylindrical {A}lgebraic {D}ecomposition, pp.
  24--84. {T}exts and {M}onographs in {S}ymbolic {C}omputation, Springer Vienna
  (1951)

\bibitem{probabilitycambridge}
Williams, D.: {P}robability with {M}artingales. Cambridge University Press
  (1991)

\bibitem{DBLP:journals/fcsc/YangZZX10}
Yang, L., Zhou, C., Zhan, N., Xia, B.: Recent advances in program verification
  through computer algebra. Frontiers of Computer Science in China  4(1),
  1--16 (2010), \url{http://dx.doi.org/10.1007/s11704-009-0074-7}

\end{thebibliography}
\end{document}